\newtheorem{theorem}{Theorem}[section]
\newtheorem{corollary}[theorem]{Corollary}
\newcommand{\ssection}[1]{%
  \section[#1]{\centering\scshape #1}}
\newcommand{\be}{\begin{eqnarray}}
\newcommand{\ee}{\end{eqnarray}}
\newcommand{\ben}{\begin{eqnarray*}}
\newcommand{\een}{\end{eqnarray*}}
\def\NONUMBIB{\def\thebibliography##1{\addvspace{12pt plus 4pt
  minus 2pt}\begin{center}\Large\bf REFERENCES\end{center}
  \addvspace{6pt}\footnotesize
  \list{}{\labelwidth0pt
    \leftmargin40pt
    \itemindent-40pt
    \itemsep=0pt\parsep=0pt
    \usecounter{enumiv}%
    \def\theenumiv{\arabic{enumiv}}}%
    \def\newblock{\hskip .11em plus.33em minus.07em}%
    \sloppy\clubpenalty4000\widowpenalty4000
    \sfcode`\.=1000\relax}}
\begin{document}

\begin{flushright}
{\large{\bf Pairwise Sequential Randomization and Its Properties}}
\end{flushright}

\bigskip
\bigskip

\begin{flushleft}
{\sc By Yichen Qin, Yang Li, Wei Ma, and Feifang Hu}\footnote{Yichen Qin is 
Assistant 
Professor of Business Analytics, University of Cincinnati.  Yang Li is Associate Professor of Statistics, Renmin 
University of China.  Wei Ma is Assistant 
Professor of Statistics, Renmin University of China.  Feifang Hu (e-mail: 
feifang@gwu.edu) is Professor of Statistics, Department of Statistics, George Washington 
University, Washington, DC 20052.  The research was partially supported by NSF 
Awards (DMS-1612970) and the National Natural Science 
Foundation of China (No. 11371366 and No. 71301162).}
\end{flushleft}

\newpage

\begin{flushright}
{\large{\bf Pairwise Sequential Randomization and Its Properties}}
\end{flushright}

\bigskip
\bigskip

\bigskip
\bigskip
\begin{quote}
{\em SUMMARY:} In comparative studies, such as in causal 
inference and clinical trials,  balancing important covariates is often one of the 
most important concerns for both efficient and credible comparison.   
However, chance imbalance still exists in many randomized experiments. 
This phenomenon of covariate imbalance becomes much more serious as the number 
of covariates $p$ increases. 
To address this issue,  we introduce a new randomization procedure, 
called pairwise sequential randomization (PSR). The proposed method allocates the units sequentially and 
adaptively, using information on the current level of imbalance and the 
incoming unit's covariate.  
With a large number of covariates or a large number of units, 
the proposed method shows  
substantial advantages over the traditional methods in terms of the covariate 
balance, estimation accuracy, and computational time, making it an ideal 
technique in the era of big data.  
The proposed method attains the optimal covariate balance, 
in the sense that the estimated treatment effect under the proposed 
method attains its minimum variance asymptotically.  Also the proposed method is widely applicable in both 
causal inference and clinical trials. Numerical studies and real 
data analysis provide further evidence of the advantages of the proposed method.

\bigskip
\bigskip
{\em KEYWORDS:  }  Asymptotic variance; Big Data; Causal Inference;  Clinical trial; Experiment design; Treatment effect.
\end{quote}
\newpage
\pagenumbering{arabic}

\section{Introduction}\label{sec:intro}

Randomization is the foundation for the treatment effect evaluation.
However, traditional randomization methods often generate unsatisfactory configurations with unbalanced prognostic covariates; 
this issue has been extensively discussed ever since \citet{Fisher1926} noted: 
``{\it Most of experimenters on carrying out a random 
assignment of plots will be shocked to find out how far from equally the plots 
distribute themselves.}''
The advantages of balanced covariates are at least 
threefold \citep{Hu2014}.  First, covariate balance improves the efficiency of 
estimation for the treatment effect.
Second, it increases the interpretability of the estimated treatment effect by 
making the units in the treatment groups more comparable, thereby enhancing the 
credibility of the analysis.  Third, it makes the analysis more robust against 
model misspecification. 
Consequently, covariate imbalance can 
significantly undermine the validity of subsequent analysis.  In the absence of covariate balance, various problems must be addressed before 
a valid conclusion can be drawn.  

In causal inference and clinical studies, 
if a significant imbalance exists, any inferences regarding the treatment effect will be inaccurate, 
and any claims about the treatment effect will need 
to rely on unverifiable assumptions \citep{Morgan2011}.  
Researchers must assess the balance in the covariate distribution before 
estimating the causal effect.  
Although some ex-post adjustments, such as regression \citep{Freedman2008} and subsample selection 
using matching or trimming based on propensity scores \citep{Imbens2015}, 
can cope with such an imbalance, 
they are much less efficient than achieving an ex-ante balance from the start \citep{Bruhn2008}.  
In addition, these adjustments often rely on at 
least a nearly correct model, which can be difficult to test \citep{Cochran1965,Cochran1973}.  \citet{Rubin2008b} explained why the greatest 
possible efforts should be made during the design phase of an experiment rather 
than during the analysis stage, at which point the researcher has the potential 
to bias the results \citep{Morgan2011,Imbens2015}.  


More recently, covariate balance has attracted growing interest in the field of crowdsourced-internet experimentation \citep{Horton2011,Chandler2013,Kapelner2014}.  Researchers increasingly recruit workers from online labor markets into their experiments, such as by asking them to label tumor cells in images.  Because of the nature of the recruiting process, a large number of workers with many covariates (e.g., 2500 workers in \citet{Chandler2013}), typically are enrolled in such studies, which consequently pose challenges for traditional randomization methods.  

Furthermore, the phenomenon of covariate imbalance is exacerbated as the number 
of covariates $p$ and the sample size $n$ increase, which is nearly ubiquitous 
in the era of big data.  For example, suppose that the probability of one 
particular covariate being unbalanced is $5\%$.  
For a study with 10 independent covariates, 
the chance of at least one covariate exhibiting imbalance is 
$1-(1-5\%)^{10}=40\%$.  
Meanwhile, some may argue that imbalance tends to be milder as the sample size 
increases. 
However, as the sample size increases, 
even though the difference 
in covariate means between groups becomes smaller, however, at the same rate, confidence intervals and hypothesis testing are becoming
more sensitive to small differences in outcome variables which can be affected 
by the small imbalance in covariates \citep{Morgan2012}.

In the framework of causal inference, \citet{Morgan2012} have proposed rerandomization 
(RR).
They propose to repeatedly randomize the units into treatment groups using complete randomization (CR), until certain the balance criterion is satisfied, e.g., $M<a$, where $M$ is the Mahalanobis 
distance between the sample means across different treatment groups and $a>0$ 
is a threshold.
\begin{align*}
M&=(\boldsymbol{\bar{x}}_1-\boldsymbol{\bar{x}}_2)^T
[\textrm{cov}(\boldsymbol{\bar{x}}_1-\boldsymbol{\bar{x}}_2)]^{-1}
(\boldsymbol{\bar{x}}_1-\boldsymbol{\bar{x}}_2)\\
&\propto (\boldsymbol{\bar{x}}_1-\boldsymbol{\bar{x}}_2)^T
\textrm{cov}(\boldsymbol{x})^{-1}
(\boldsymbol{\bar{x}}_1-\boldsymbol{\bar{x}}_2),
\end{align*}
where $\boldsymbol{\bar{x}}_1 \in \mathbb{R}^p$ and $\boldsymbol{\bar{x}}_2 \in 
\mathbb{R}^p$ are the sample 
means for two treatment groups, $\textrm{cov}(\boldsymbol{x}) \in \mathbb{R}^{p 
\times p}$ is the covariance matrix of the covariate. 
They has also assumed fixed equal numbers of units in two treatment groups and demonstrated various desirable properties under rerandomization.

Although rerandomization works well in the case of a few covariates, it is 
incapable of scaling up to address massive amounts of data.  For example, as 
the number of covariates increases, the probability of acceptance, 
$p_a=P(M<a)$, of each complete randomization decreases drastically, 
causing the rerandomization procedure to remain in loop for a long time.  
To compromise the computational burden, one can increase $a$, which unavoidably leads poorer covariate imbalance.

In clinical trials, to balance important covariates, most existing methods such as stratified 
permuted block design, minimization methods \citep{Taves1974,Pocock1975,Hu2012}  
and CA-BCD \citep{Antognini2011} are only for discrete covariates.  Discretizing continuous covariates 
is often less efficient and changes the nature of the covariates. A variety of 
methods for balancing continuous covariates have been proposed in the 
literature:  the methods based on ranks 
\citep{Ciolino2011,Hoehler1987,Stigsby2010}; based on p-value  
\citep{Frane1998}; based on Kullback-Leibler divergence (KLD); based on 
empirical cumulative distribution \citep{Lin2012}; based on kernel 
density \citep{Ma2013}, etc.
However, the performance of those procedures was usually evaluated by simulation studies,  their theoretical properties are not well investigated in literature. Also these methods are usually applicable for only a few covariates.

In this article, we propose a new approach --- pairwise sequential 
randomization (PSR) --- 
to generate a more balanced treatment allocation and thus to 
improve the subsequent analysis for  both causal inference 
and clinical trails settings.  
Unlike rerandomization or complete randomization, in which all units are 
allocated independently, we allocate units adaptively and sequentially by 
assigning one randomly chosen pair of units at a time.  
For each pair of units, using their 
covariate information and the existing level of imbalance of the previously 
allocated units, we adjust the probability with which the pair is allocated to 
treatment groups to avoid incidental covariate imbalance.  In this way, we are 
able to produce a much more balanced allocation of units.  The
properties of the PSR procedure are illustrated both theoretically and numerically.

The advantages of the proposed method are:
(i) For cases with a large number of covariates or a 
large number of units, the proposed method exhibits superior performance, with 
more balanced randomization and less computational time; (ii)  The PSR procedure attains 
the optimal covariate balance, in the sense that the estimated treatment effect under the proposed 
method attains its minimum variance asymptotically; and (iii) The proposed procedure is designed for directly randomizing units 
with both continuous and discrete covariates. Therefore the PSR procedure is widely applicable for balancing many important covariates in comparative studies.

This article is organized as follows.  We introduce the proposed method and 
investigate its theoretical properties in Section 
\ref{sec:Covariate_Adaptive_Maha}.  
We demonstrate 
its advantages in the treatment effect estimation and present theoretical 
properties in Section \ref{sec:Inf_Covariate_Adaptive}.  Numerical 
studies to verify the finite sample properties of the proposed method are shown in Section 
\ref{sec:Numerical_Study}.  
We further present an real data example to demonstrate the superior performance 
of our method in Section \ref{sec:Real_Data}.  
Finally, we conclude with a discussion in Section \ref{sec:Discussion} and 
relegate the outlining of proofs to Section \ref{sec:Appendix}.

\section{Pairwise Sequential Randomization}\label{sec:Covariate_Adaptive_Maha}

\subsection{Proposed Method and Its Properties}

Suppose that $n$ units (patients) are to be assigned to two treatment groups.  
Let $T_i$ 
be the assignment of the $i$-th unit, i.e., $T_i=1$ for treatment 1 and $T_i=0$ 
for treatment 2.  Consider $p$ continuous covariates for each unit.  
Let $\boldsymbol{x}_i=(x_{i1}, ...,x_{ip})^T \in \mathbb{R}^p$ represent the covariates of the 
$i$-th unit.  
Suppose all units are available for assignment at the beginning of the 
randomization.
We choose the Mahalanobis distance as the covariate imbalance measure, $M(n) = 
(\boldsymbol{\bar{x}}_1-\boldsymbol{\bar{x}}_2)^T 
\textup{cov}(\boldsymbol{\bar{x}}_1-\boldsymbol{\bar{x}}_2)^{-1}
(\boldsymbol{\bar{x}}_1-\boldsymbol{\bar{x}}_2) \propto 
(\boldsymbol{\bar{x}}_1-\boldsymbol{\bar{x}}_2)^T
\textrm{cov}(\boldsymbol{x})^{-1}
(\boldsymbol{\bar{x}}_1-\boldsymbol{\bar{x}}_2)$.
This Mahalanobis distance functions as a measure of the covariate balance 
throughout this article.  
A smaller value of $M(n)$ indicates a better covariate balance.
To assign units to treatment groups, 
we propose the following procedure, pairwise sequential randomization (PSR).
\begin{enumerate}
\item[(1)]
Arrange all $n$ units randomly into a sequence $\boldsymbol{x}_1,...,\boldsymbol{x}_n$.

\item[(2)]
Assign the first two units with $T_1=1$ and $T_2=0$.

\item[(3)]
Suppose that $2 i$ units have been assigned to treatment groups, for the 
$(2i+1)$-th and $(2i+2)$-th units:
\begin{enumerate}

\item[(3a)]
If the $(2i+1)$-th unit is assigned to treatment 1 and the $(2i+2)$-th unit to 
treatment 2, then we can calculate the ``potential'' Mahalanobis 
distance, $M_1(2i+2)$, between the updated treatment groups with $2i+2$ units.

\item [(3b)]
Similarly, if the $(2i+1)$-th unit is assigned to treatment 2 and the 
$(2i+2)$-th unit to treatment 1, then we can calculate the other 
``potential'' Mahalanobis distance, $M_2(2i+2)$.

\end{enumerate}

\item[(4)]
Assign the $(2i+1)$-th unit to treatment groups according to the following 
probabilities:
\begin{align*}
P(T_{2i+1}=1|\boldsymbol{x}_{2i},...,\boldsymbol{x}_{1},T_{2i},...,T_{1})&=
\begin{cases}
q & \text{if } M_1(2i+2) < M_2(2i+2),\\
1-q & \text{if } M_1(2i+2) > M_2(2i+2),\\
0.5 & \text{if } M_1(2i+2) = M_2(2i+2),
\end{cases}
\end{align*}
where $0.5<q<1$, and assign $T_{2i+2}=1 - T_{2i+1}$ to maintain the equal 
proportions.

\item[(5)]

Repeat the last two steps until all units are assigned.  If $n$ is odd, assign 
the last unit to two treatments with equal probabilities.
\end{enumerate}

There are several advantages for adopting Mahalanobis distance as the imbalance 
measure. 
First, it is an affinely invariant imbalance measure,  which is 
appealing especially for multivariate data. 
It is an overall imbalance measure which standardizes and aggregates each 
covariate imbalance information. 
A low Mahalanobis distance guarantees low imbalance levels in all covariates.
Note that when the covariance matrix is identity matrix, the Mahalanobis distance essentially becomes the L2 norm of the imbalance vector, which is a traditional measure of covariate imbalance in clinical trials.
In addition, using Mahalanobis distance as imbalance measure, various desirable statistical 
properties can be obtained, such as the reduction in variance of the 
estimated treatment effect and optimal asymptotic variance for treatment effect 
estimation.
In practice, the covariance matrix is replaced with the sample covariance 
matrix.

The value of $q$ is set to 0.75 throughout this article.  Different values of 
$q$ will not affect the theoretical results presented in this article. For a 
further discussion of $q$, please see Hu and Hu (2012).  
Note that the sequence in which the units are allocated is not unique.  
Rather, there are $n!$ different possible sequences, but their performances are 
similar, especially when $n$ is large.

We now study the asymptotic properties of the Mahalanobis distance, $M(n)$, 
obtained using the proposed method.

\begin{theorem}\label{thm:MahaDistLimitDist}
Under the pairwise sequential randomization (PSR), suppose that the covariate $\boldsymbol{x}_i$, $i=1,...,n$, is independent and identically distributed as a multivariate normal distribution with zero mean; then we have $M(n)=O_p(n^{-1})$.
\end{theorem}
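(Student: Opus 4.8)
The plan is to reduce $M(n)$ to the squared length of an adaptively signed random walk, compute its one-step drift exactly, and then invoke a stochastic-stability argument.

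First I would rewrite $M(n)$ as a running sum. Since the $\boldsymbol{x}_i$ are i.i.d., the random arrangement in Step (2) is immaterial and we may process the units in the order $1,\dots,n$. With $p_n=1/2$ each arm ends with $n/2$ units, so $\boldsymbol{\bar{x}}_1-\boldsymbol{\bar{x}}_2=\tfrac2n\sum_{i=1}^n(2T_i-1)\boldsymbol{x}_i=:\tfrac2n\,\boldsymbol{S}$ and hence $M(n)=\tfrac1n\,\boldsymbol{S}^T\textup{cov}(\boldsymbol{x})^{-1}\boldsymbol{S}$ (using $np_n(1-p_n)=n/4$). Step (3) puts one of the first two units in each arm, and by Step (6) the $i$-th later pair $(\boldsymbol{x}_{2i+1},\boldsymbol{x}_{2i+2})$ contributes $\pm(\boldsymbol{x}_{2i+1}-\boldsymbol{x}_{2i+2})$ to $\boldsymbol{S}$; writing $\boldsymbol{z}_0:=\boldsymbol{x}_1-\boldsymbol{x}_2$, $\boldsymbol{z}_i:=\boldsymbol{x}_{2i+1}-\boldsymbol{x}_{2i+2}$ and $\boldsymbol{S}_i:=\boldsymbol{z}_0+\sum_{l=1}^i\epsilon_l\boldsymbol{z}_l$ with $\epsilon_l\in\{\pm1\}$, we get $\boldsymbol{S}=\boldsymbol{S}_m$ for $m=n/2-1$. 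The $\boldsymbol{z}_i$ are i.i.d.\ $N(\boldsymbol{0},2\,\textup{cov}(\boldsymbol{x}))$ and independent of the sign history. Expanding the two potential distances (which use the same sample size and scaling), the event $M_i^{(1)}<M_i^{(2)}$ is exactly $\boldsymbol{S}_{i-1}^T\textup{cov}(\boldsymbol{x})^{-1}\boldsymbol{z}_i<0$, so Step (6) amounts to: with probability $q$ pick $\epsilon_i=-\textup{sign}\!\big(\boldsymbol{S}_{i-1}^T\textup{cov}(\boldsymbol{x})^{-1}\boldsymbol{z}_i\big)$, the sign that decreases $\|\boldsymbol{S}_{i-1}+\epsilon_i\boldsymbol{z}_i\|$ in the $\textup{cov}(\boldsymbol{x})^{-1}$-norm, and the opposite sign with probability $1-q$. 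Setting $W_i:=\boldsymbol{S}_i^T\textup{cov}(\boldsymbol{x})^{-1}\boldsymbol{S}_i$, the theorem becomes $W_{n/2-1}=O_p(1)$.

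Next I would compute the one-step conditional drift. Since $W_i=W_{i-1}+2\epsilon_i\boldsymbol{S}_{i-1}^T\textup{cov}(\boldsymbol{x})^{-1}\boldsymbol{z}_i+\boldsymbol{z}_i^T\textup{cov}(\boldsymbol{x})^{-1}\boldsymbol{z}_i$, the sign rule gives $E\big[2\epsilon_i\boldsymbol{S}_{i-1}^T\textup{cov}(\boldsymbol{x})^{-1}\boldsymbol{z}_i\mid\mathcal{F}_{i-1},\boldsymbol{z}_i\big]=-2(2q-1)\big|\boldsymbol{S}_{i-1}^T\textup{cov}(\boldsymbol{x})^{-1}\boldsymbol{z}_i\big|$. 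Normality is what makes this tractable: conditionally on $\mathcal{F}_{i-1}$ one has $\boldsymbol{S}_{i-1}^T\textup{cov}(\boldsymbol{x})^{-1}\boldsymbol{z}_i\sim N(0,2W_{i-1})$, so its conditional mean absolute value equals $2\sqrt{W_{i-1}/\pi}$, while $E\big[\boldsymbol{z}_i^T\textup{cov}(\boldsymbol{x})^{-1}\boldsymbol{z}_i\big]=\tr(2\boldsymbol{I})=2p$. Hence
\[
E[W_i\mid\mathcal{F}_{i-1}]=W_{i-1}-c\sqrt{W_{i-1}}+2p,\qquad c:=\frac{4(2q-1)}{\sqrt{\pi}}>0 .
\]
Moreover, whitening by $\textup{cov}(\boldsymbol{x})^{-1/2}$ and using the rotational symmetry of $\boldsymbol{z}_i$ shows that the entire conditional law of $W_i$ given $\mathcal{F}_{i-1}$ depends on the past only through $W_{i-1}$, so $(W_i)$ is a Markov chain on $[0,\infty)$.

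Finally I would turn this drift into a bound uniform in $i$. Because $c\sqrt{w}-2p\ge p$ once $w\ge(3p/c)^2$, the chain has drift $\le-p$ outside a bounded set and drift $\le 2p$ inside it, and---crucially---the increments $\sqrt{W_i}-\sqrt{W_{i-1}}$ are uniformly sub-exponential once $W_{i-1}$ exceeds a fixed constant (they are dominated by a standard-normal term plus a chi-square-type term scaled by $W_{i-1}^{-1/2}$). A standard Foster--Lyapunov / stochastic-stability argument, in the spirit of Hu and Hu (2012), then yields $\sup_i P(W_i>K)\to 0$ as $K\to\infty$; in particular $W_{n/2-1}=O_p(1)$, whence $M(n)=W_{n/2-1}/n=O_p(n^{-1})$. (If $n$ is odd the extra randomly assigned unit perturbs $M(n)$ by $O_p(n^{-1})$, and if the sample covariance is used in place of $\textup{cov}(\boldsymbol{x})$, its consistency makes the difference asymptotically negligible.) I expect this last step to be the main obstacle: the exact recursion delivers the negative drift at once, but upgrading it to a bound that holds uniformly over the step index $i$---so that it applies at the terminal step $i=n/2-1$ and not merely as a Cesàro average---requires both the increment-tail control and the stability theorem.
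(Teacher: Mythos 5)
Your proposal is correct and follows essentially the same route as the paper: both reduce $M(n)$ to the squared (whitened) norm of an adaptively signed random walk over pairs, compute the negative one-step drift of that squared norm arising from $q>1/2$, and invoke a Foster--Lyapunov drift condition to conclude tightness of the walk and hence $M(n)=O_p(n^{-1})$. Your version is in fact slightly sharper in two spots --- you evaluate the drift exactly as $-c\sqrt{W_{i-1}}+2p$ using the conditional normal law of $\boldsymbol{S}_{i-1}^T\textup{cov}(\boldsymbol{x})^{-1}\boldsymbol{z}_i$ where the paper argues geometrically via $|\cos\theta|$, and you flag explicitly that the drift condition must be upgraded to a bound uniform in the step index rather than mere existence of a stationary distribution, a point the paper's outline passes over.
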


Note that the Mahalanobis distance that is obtained through the complete 
randomization, $M_{\textup{CR}} (n)$, has a stationary distribution of a 
Chi-square distribution with $p$ degrees of freedom (regardless of $n$), i.e., 
$M_{\textup{CR}} (n) \sim \chi^2_{df=p}$.  Therefore, the Mahalanobis distance 
obtained through rerandomization,  $M_{{\textup{RR}}}(n)$, has a conditional 
Chi-square distribution, i.e., $M_{\textup{RR}} (n) \sim 
\chi^2_{df=p}|\chi^2_{df=p}<a$.  Hence, as the sample size $n$ increases, the 
proposed method reveals a greater advantage over both rerandomization and 
complete randomization, because $M(n)$ converges to 0 at the rate of $1/n$.  
That is, the more units included, the better the covariate balance becomes.

Moreover, as the number of covariates $p$ 
increases, the distribution of $M_{\textup{CR}}(n)$ becomes flatter, 
which implies poorer allocation in terms of covariate balance.  As a 
consequence, rerandomization has a lower 
probability of acceptance, $p_a=P(M_{\textup{CR}}(n)<a)$.  Therefore, the 
advantage of the proposed method also becomes more significant as $p$ 
increases, because the $M(n)$ obtained using the proposed method converges to 0 
regardless of the magnitude of $p$.

In Figure \ref{fig:rate_M_vs_n_w_1overn}, we conduct a simple simulation by 
plotting the sequences of Mahalanobis distance as more units are assigned using 
the proposed method.
As we can see, 
the trajectories converges to zero approximately at the speed of $1/n$.  
Extensive simulation studies can be found in Section \ref{sec:Numerical_Study}.

\begin{figure}  \centering
    \includegraphics[scale=0.45]{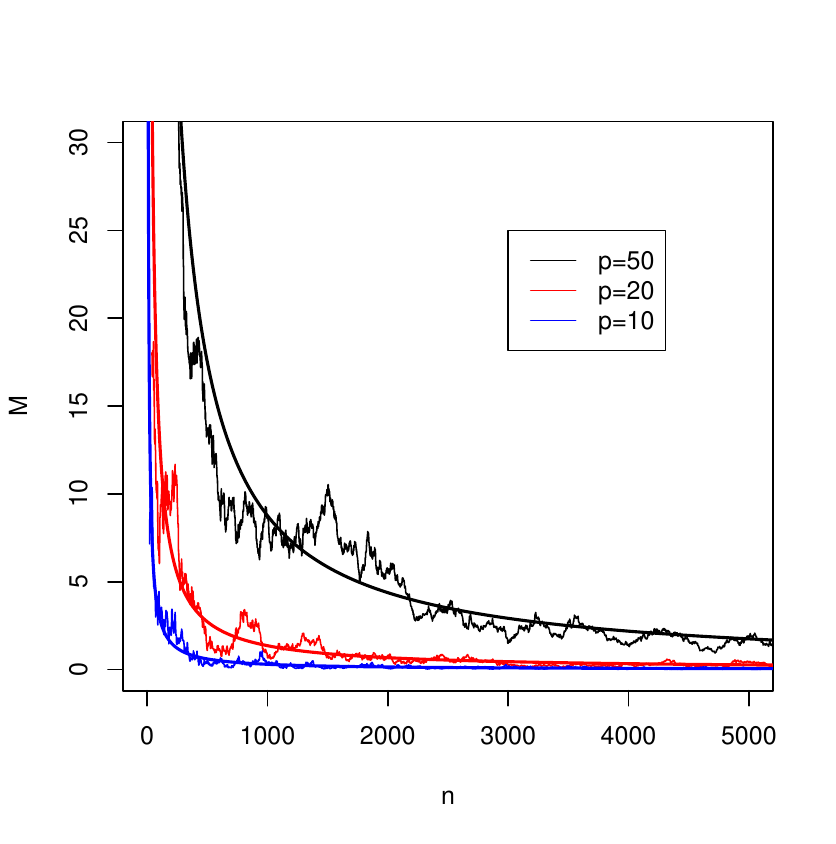}
    \caption{Convergence of $M(n)$ using the proposed 
    method.  Solid curves are fitted trends $M= c/n$ with $c>0$.}
    \label{fig:rate_M_vs_n_w_1overn}
\end{figure}

\subsection{Clinical Trial Settings}\label{sec:proposed_clinical_trial}

The proposed algorithm can be easily adopted in clinical trial studies where 
patients are sequentially enrolled and the treatment is conducted after the 
individual enrollment. Since the units come in a natural order, we do not need Step 1 anymore.
In order to have a valid covariance matrix estimate, we also need 
to increase the burn in number in Step 2 to be larger than $p$. For example, we can 
implement the first $m$ pairs ($2m>p$) by simple randomization (in each pair, the assignment is $(1,0)$ or $(0,1)$ with half probability). For the $(i+1)$th pair (here $i>m$), $2i$ units  have already been enrolled 
in study. In Step 3, one calculate  calculate the potential Mahalanobis 
distance, we only need to use the sample covariance matrix of $2i$ 
units, $\widehat{\Sigma}_{2i}$. Other steps are the same.

In literature, continuous covariates are usually discretized in order to be included in the above balancing procedures. However, breakdown of a continuous covariate into subcategories means increased effort and loss of information as pointed in \citet{Scott2002}. 
\citet{Ciolino2011} further pointed out that: ``Lack of publicity for practical methods for continuous covariate balancing and lack of knowledge on the cost of failing to balance continuous covariates results in a common phenomenon, whereby continuous covariates are excluded from the randomization plan in clinical trials.''
It is important to note that the proposed method is designed for directly randomizing units 
with continuous covariates. Also the PSR procedure works well for large $p$ and $n$, while the other methods only work for small $p$.

Through simulation studies (Section \ref{sec:Numerical_Study}), we can show 
that the above scenarios yield similar 
results in terms of covariate balance especially 
when sample size is large.

\section{Treatment Effect Estimation}\label{sec:Inf_Covariate_Adaptive}

\subsection{Framework}
After allocating the units to treatment groups, we are interested in estimating 
the treatment effect from the outcome variable $y_i$ obtained under the 
treatment $T_i$ for $i=1,...,n$. 
A natural choice is
\begin{align}\label{eq:tau_hat}
\hat{\tau}=\frac{\sum_{i=1}^{n} T_i y_i}{\sum_{i=1}^{n}T_i}-\frac{\sum_{i=1}^{n} (1-T_i)y_i}{\sum_{i=1}^{n}(1-T_i)},
\end{align}
which is simply the difference in the sample means of $y_i$ for the different 
groups.
One problem with $\hat{\tau}$ is that if there is an imbalance in the 
covariates, it will affect the accuracy of $\hat{\tau}$.  For example, if we 
estimate the effect of a drug when the treatment 1 group contains mostly males 
and the treatment 2 group contains mostly females, then the estimated treatment 
effect $\hat{\tau}$ will not be able to exclude the effect of gender.

To adjust for such an imbalance, we can use linear regression to estimate the treatment effect.  That is, conditional on the treatment assignment $T_i$, each outcome variable is assumed to follow the model below:
\begin{align}\label{eq:true_model}
y_i=\mu_1 T_i + \mu_2 (1-T_i)+\beta_1 x_{i1}+...+\beta_p x_{ip}  + \epsilon_i,
\end{align}
where $\mu_1$ and $\mu_2$ are the main effects of treatments 1 and 2, respectively, and $\mu_1-\mu_2=\tau$ is the treatment effect.  Furthermore, $\beta_j$ represents the covariate effect, and $\epsilon_i$ is an independent and identically distributed random error with zero mean and constant variance $\sigma_{\epsilon}^2$, and is independent of $\boldsymbol{x}_i = (x_{i1},..., x_{ip})^T$.  All covariates $\boldsymbol{x}_i$, $i=1,...,n$, are independent and identically distributed.

Let us define \begin{align*}
\boldsymbol{Y}=\begin{bmatrix}
   y_{1}\\
   y_{2}\\
   \vdots\\
   y_{n}
  \end{bmatrix},
\boldsymbol{X}= \begin{bmatrix}
  \boldsymbol{x}_1^T \\
  \boldsymbol{x}_2^T \\
  \vdots  \\
  \boldsymbol{x}_n^T
 \end{bmatrix}
 = \begin{bmatrix}
  x_{11} &\cdots & x_{1p} \\
  x_{21} &\cdots & x_{2p} \\
  \vdots  & \ddots & \vdots  \\
  x_{n1} &\cdots & x_{np}
 \end{bmatrix},
\boldsymbol{T}=\begin{bmatrix}
   T_{1}\\
   T_{2}\\
   \vdots\\
   T_{n}
  \end{bmatrix},
\widetilde{\boldsymbol{T}}= \begin{bmatrix}
  T_{1} & 1-T_{1} \\
  T_{2} & 1-T_{2}\\
  \vdots  & \vdots\\
  T_{n} & 1-T_{n}
 \end{bmatrix},
\end{align*}
$\boldsymbol{\widetilde{X}}=[\widetilde{\boldsymbol{T}} ; \boldsymbol{X}]$, $\boldsymbol{\beta}=(\beta_1,...,\beta_p)^T$, and $\boldsymbol{\beta}^*=(\mu_1,\mu_2,\beta_1,...,\beta_p)^T$.  Then, we can obtain the ordinary least squares estimate of $\boldsymbol{\beta}^*$:
\begin{align*}
\hat{\boldsymbol{\beta}}^*=(\widetilde{\boldsymbol{X}}^T \widetilde{\boldsymbol{X}})^{-1}\widetilde{\boldsymbol{X}}^T \boldsymbol{Y}.
\end{align*}
Let us consider $\boldsymbol{L}=(1,-1,0,...,0)^T$, a $(p+2)$-dimensional vector.  We define 
\begin{align*}
\tilde{\tau}=\boldsymbol{L}^T\hat{\boldsymbol{\beta}}^*,
\end{align*} 
which is another estimate of the treatment effect that is adjusted for the 
imbalance in the covariates.  Note that if $\widetilde{\boldsymbol{X}}$ does 
not include any covariates, i.e., 
$\boldsymbol{\widetilde{X}}=\widetilde{\boldsymbol{T}}$, then the regression 
model is $y_i=\mu_1 T_i + \mu_2 (1-T_i) + \epsilon_i$, and $\tilde{\tau}$ 
becomes $\hat{\tau}$ in Equation \eqref{eq:tau_hat}, which is the estimated 
treatment effect without adjusting for the imbalance in the covariates.

In the next section, we study the properties of $\hat{\tau}$ and $\tilde{\tau}$ 
under our proposed method (i.e., $\hat{\tau}_{\textup{PSR}}$ and 
$\tilde{\tau}_{\textup{PSR}}$) and under complete other randomization methods such as CR and RR.

\subsection{Theoretical Properties}

Under complete randomization and rerandomization, $\hat{\tau}_{\textup{CR}}$ 
and $\hat{\tau}_{\textup{RR}}$ are unbiased.  We can similarly show the consistency and asymptotic normality of $\hat{\tau}_{\textup{PSR}}$ and $\tilde{\tau}_{\textup{PSR}}$ for the proposed 
method.  
Before introducing our key properties, we first 
show the following properties:
\begin{theorem}\label{thm:covariance_covariate}
Under the pairwise sequential randomization (PSR), suppose that the covariate $\boldsymbol{x}_i$, $i=1,...,n$, is independent and identically distributed as a multivariate normal distribution with zero mean; then we have
\begin{align*}
\textup{cov}[\boldsymbol{\bar{x}}_1-\boldsymbol{\bar{x}}_2|\boldsymbol{X}, 
\textup{PSR}]=u_{n} 
\textup{cov}[\boldsymbol{\bar{x}}_1-\boldsymbol{\bar{x}}_2|\boldsymbol{X}, 
\textup{CR}],
\end{align*}
where $u_{n} = \mathbb{E}[M(n)/p|\boldsymbol{X}, \textup{PSR}]$ and $u_{n} = 
O(n^{-1})$.
\end{theorem}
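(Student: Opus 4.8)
The plan is to reduce to whitened, standard-normal covariates, combine the rotational invariance of that law with the fact that the CAM procedure consults only Euclidean geometry, and then read off the scalar $u_n$ by taking a trace. First I would standardize, setting $\boldsymbol{z}_i=\textup{cov}(\boldsymbol{x})^{-1/2}\boldsymbol{x}_i$, so that $\boldsymbol{z}_1,\dots,\boldsymbol{z}_n$ are i.i.d.\ $N(\boldsymbol{0},\boldsymbol{I}_p)$ and $\boldsymbol{\bar{x}}_1-\boldsymbol{\bar{x}}_2=\textup{cov}(\boldsymbol{x})^{1/2}(\boldsymbol{\bar{z}}_1-\boldsymbol{\bar{z}}_2)$. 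Since the Mahalanobis distance evaluated at every stage of CAM is exactly a multiple of the squared Euclidean distance between the current whitened group means, running CAM on the $\boldsymbol{x}_i$ is the same procedure as running it on the $\boldsymbol{z}_i$, and $\textup{cov}[\boldsymbol{\bar{x}}_1-\boldsymbol{\bar{x}}_2\mid\cdot]=\textup{cov}(\boldsymbol{x})^{1/2}\,\textup{cov}[\boldsymbol{\bar{z}}_1-\boldsymbol{\bar{z}}_2\mid\cdot]\,\textup{cov}(\boldsymbol{x})^{1/2}$ under both CAM and CR. It therefore suffices to prove the proportionality for $\boldsymbol{\bar{z}}_1-\boldsymbol{\bar{z}}_2$, for which the CR side is elementary: $\textup{cov}[\boldsymbol{\bar{z}}_1-\boldsymbol{\bar{z}}_2\mid\boldsymbol{X},\textup{CR}]$ is a known multiple of the whitened sample covariance matrix, asymptotically $\tfrac{4}{n}\boldsymbol{I}_p$. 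I would also record at the outset that $\mathbb{E}[\boldsymbol{\bar{z}}_1-\boldsymbol{\bar{z}}_2\mid\boldsymbol{X},\textup{CAM}]=\boldsymbol{0}$: swapping the two treatment labels interchanges $M_i^{(1)}$ and $M_i^{(2)}$ and leaves every pairwise allocation probability unchanged, so $\boldsymbol{T}$ and $\boldsymbol{1}-\boldsymbol{T}$ have the same conditional law (the symmetry already behind Theorem~\ref{thm:treatment_effect_consistency}), and hence $\boldsymbol{\bar{z}}_1-\boldsymbol{\bar{z}}_2$ is symmetric about the origin. So no mean term intervenes and the theorem is purely a statement about the shape of a covariance matrix.

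The core of the proof is a rotational-equivariance argument. Each number CAM inspects --- each potential Mahalanobis distance $M_i^{(1)}$, $M_i^{(2)}$ --- depends only on the norms and inner products of $\boldsymbol{z}_1,\dots,\boldsymbol{z}_{2i+2}$, hence is invariant under $\boldsymbol{z}_j\mapsto\boldsymbol{Q}\boldsymbol{z}_j$ for orthogonal $\boldsymbol{Q}$; thus CAM on $\{\boldsymbol{Q}\boldsymbol{z}_j\}$ and CAM on $\{\boldsymbol{z}_j\}$ yield assignment vectors with identical conditional laws, so the map $\boldsymbol{Z}\mapsto\textup{cov}[\boldsymbol{\bar{z}}_1-\boldsymbol{\bar{z}}_2\mid\boldsymbol{Z},\textup{CAM}]$ is equivariant, i.e.\ replacing $\boldsymbol{Z}$ by $\boldsymbol{Q}\boldsymbol{Z}$ conjugates it by $\boldsymbol{Q}$. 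Because the law of $\boldsymbol{Z}$ is itself rotation invariant, averaging this equivariant matrix over $\boldsymbol{Z}$ produces a matrix commuting with every orthogonal conjugation, hence equal to $c_n\boldsymbol{I}_p$ for a scalar $c_n\ge 0$; the same reasoning applied to CR gives $c_n'\boldsymbol{I}_p$ with $c_n'$ an explicit positive constant. At the level of the unconditional covariance this already gives $\textup{cov}[\boldsymbol{\bar{x}}_1-\boldsymbol{\bar{x}}_2\mid\textup{CAM}]=u_n\,\textup{cov}[\boldsymbol{\bar{x}}_1-\boldsymbol{\bar{x}}_2\mid\textup{CR}]$ with $u_n:=c_n/c_n'$. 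To keep the conditioning on $\boldsymbol{X}$ as written, I would then show that under the Gaussian model the whitened empirical configuration is isotropic up to fluctuations of order smaller than $n^{-1}$, so that the anisotropic part of $\textup{cov}[\boldsymbol{\bar{z}}_1-\boldsymbol{\bar{z}}_2\mid\boldsymbol{X},\textup{CAM}]$ is negligible at the order that matters; quantifying this anisotropy for a fixed $\boldsymbol{X}$ is the step I expect to demand the most work.

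It remains to identify $u_n$ and its rate. Taking the trace of $\tfrac{n}{4}\,\textup{cov}(\boldsymbol{x})^{-1}$ times both sides of the proportionality turns the left-hand side into $\tfrac{n}{4}\,\tr\!\big(\textup{cov}(\boldsymbol{x})^{-1}\,\textup{cov}[\boldsymbol{\bar{x}}_1-\boldsymbol{\bar{x}}_2\mid\boldsymbol{X},\textup{CAM}]\big)=\mathbb{E}[M(n)\mid\boldsymbol{X},\textup{CAM}]$, by the definition of $M(n)$, and the right-hand side into $u_n$ times the corresponding constant CR trace, which identifies $u_n$ with $\mathbb{E}[M(n)\mid\boldsymbol{X},\textup{CAM}]$ as claimed. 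Finally, $u_n=O(n^{-1})$ is inherited from Theorem~\ref{thm:MahaDistLimitDist}: that result gives $M(n)=O_p(n^{-1})$, and upgrading this to $\mathbb{E}[M(n)\mid\boldsymbol{X},\textup{CAM}]=O(n^{-1})$ requires a uniform-integrability or moment bound on $n\,M(n)$, which for Gaussian covariates can be extracted from the same supermartingale-type estimate that drives the proof of Theorem~\ref{thm:MahaDistLimitDist}. The two genuinely nonroutine points are thus the control of the conditional anisotropy for a fixed $\boldsymbol{X}$ and the moment bound needed to move from $O_p$ to $O$.
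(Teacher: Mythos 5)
The paper relegates the proof of Theorem \ref{thm:covariance_covariate} to its supplementary materials, so a line-by-line comparison is not possible; what can be said is that your skeleton --- whiten, kill the mean by the label-swapping symmetry, use rotational invariance of the whitened design to force the covariance to be a multiple of the identity, and identify the multiple by tracing against $\textup{cov}(\boldsymbol{x})^{-1}$ --- is the natural analogue of the Morgan--Rubin argument for rerandomization and is surely the intended route. But the two steps you defer as ``the ones demanding the most work'' are genuine gaps rather than routine verifications, and one small computation is off.

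First, rotational equivariance only delivers the \emph{unconditional} proportionality. Conditional on a fixed realization $\boldsymbol{X}$, the randomization distribution of $\boldsymbol{\bar{z}}_1-\boldsymbol{\bar{z}}_2$ is supported on finitely many points and cannot be exactly isotropic, so the exact equality $\textup{cov}[\,\cdot\,|\boldsymbol{X},\textup{CAM}]=u_n\,\textup{cov}[\,\cdot\,|\boldsymbol{X},\textup{CR}]$ cannot come out of your argument; your proposed repair (bounding the conditional anisotropy) would yield only equality up to an error term, not the identity as stated. You would need either to read the conditioning as Morgan and Rubin do --- assume the randomization distribution of the difference in means is multivariate normal, hence spherically symmetric in whitened coordinates --- or to prove an unconditional/asymptotic version; as written, your sketch establishes neither. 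Second, the rate $u_n=O(n^{-1})$ does not follow from Theorem \ref{thm:MahaDistLimitDist} plus a generic uniform-integrability appeal: the drift computation there uses the test function $V(\boldsymbol{y})=\boldsymbol{y}^T\boldsymbol{y}$ and produces a negative drift of order $-|\boldsymbol{y}_n|$, which by Foster--Lyapunov controls only the \emph{first} stationary moment $\mathbb{E}|\boldsymbol{y}_n|$; bounding $\mathbb{E}[M(n)]\propto n^{-1}\mathbb{E}|\boldsymbol{y}_n|^2$ requires rerunning the drift argument with a higher-order Lyapunov function (e.g.\ $|\boldsymbol{y}|^3$) to control the second moment. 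Finally, your trace identification actually yields $\mathbb{E}[M(n)|\boldsymbol{X},\textup{CAM}]=u_n\,\mathbb{E}[M(n)|\boldsymbol{X},\textup{CR}]=u_n\,p$, i.e.\ $u_n=\mathbb{E}[M(n)|\boldsymbol{X},\textup{CAM}]/p$, the analogue of Morgan--Rubin's $v_a=\mathbb{E}[\chi^2_p\,|\,\chi^2_p<a]/p$; the claim that the trace ``identifies $u_n$ with $\mathbb{E}[M(n)|\boldsymbol{X},\textup{CAM}]$'' drops a factor of $p$ (a discrepancy shared with the theorem's own normalization, and immaterial to the $O(n^{-1})$ rate, but it should be flagged rather than elided).
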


In randomized experiments, the emphasis typically is placed on the percent 
reduction in variance (PRIV) defined by \citet{Morgan2012}.  This 
quantity represents the percentage by which the randomization method reduces 
the variance of the differences in the means calculated for the different 
treatment groups.  A higher value of the PRIV indicates that the 
means are closer to each other.  Consider the PRIV for the $j$-th covariate, 
\begin{align*}
100\Big(\frac{\textup{Var}[\bar{x}_{j,1}-\bar{x}_{j,2}|\boldsymbol{X}, 
\textup{CR}]-\textup{Var}[\bar{x}_{j,1}-\bar{x}_{j,2}|\boldsymbol{X}, 
\textup{PSR}]}{\textup{Var}[\bar{x}_{j,1}-\bar{x}_{j,2}|\boldsymbol{X}, 
\textup{CR}]}\Big),
\end{align*}
where $\bar{x}_{j,1}$ and $\bar{x}_{j,2}$ are the $j$-th elements of 
$\boldsymbol{\bar{x}}_1$ and $\boldsymbol{\bar{x}}_2$.  According to Theorem 
\ref{thm:covariance_covariate}, the PRIV of each covariate is $100(1-u_{n}) \%$ 
under the proposed method.  Recall that the PRIV of rerandomization for
each covariate is $100(1-v_a) \%$ where $v_a>0$ is a function of $a$.  
In contrast, for the proposed method, $\textup{PRIV}_{\textup{PSR}} \to 100 \%$ 
as $n \to 
\infty$.
which implies that, as the sample size increases, the covariate imbalance 
reaches the minimum level.  This is particularly useful when the covariates and 
outcome are correlated, because in this case, the proposed method will in turn 
improve the precision of the estimation of the treatment effect, as detailed in 
the following theorem.
\begin{theorem}\label{thm:percent_reduction_variance}
Under the pairwise sequential randomization (PSR), suppose that the outcome variable 
$y_i$ and the covariate $\boldsymbol{x}_i$ are normally distributed and that 
the treatment effect is additive; then, the percent reduction in variance 
(PRIV) of $\hat{\tau}_{\textup{PSR}}$ is $100(1-u_{n})R^2$, where $R^2$ is the 
squared multiple correlation between $y_i$ and $\boldsymbol{x}_i$ within the 
treatment groups, and $u_{n}=O(n^{-1})$.
\end{theorem}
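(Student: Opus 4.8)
The plan is to decompose $\hat{\tau}_{\textup{CAM}}$ via the additive model \eqref{eq:true_model} and reduce the variance computation to the covariate-imbalance term already controlled by Theorem \ref{thm:covariance_covariate}. Because the two groups have equal size ($p_n=1/2$), averaging \eqref{eq:true_model} within each group and subtracting gives
\[
\hat{\tau}=\tau+\boldsymbol{\beta}^{T}(\boldsymbol{\bar{x}}_1-\boldsymbol{\bar{x}}_2)+(\bar{\epsilon}_1-\bar{\epsilon}_2),
\]
where $\bar{\epsilon}_1,\bar{\epsilon}_2$ are the within-group error averages; additivity of the treatment effect is what guarantees that the same $\boldsymbol{\beta}$ appears in both groups, so no treatment-heterogeneity term is present. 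The structural point is that the CAM allocation uses the covariates only, never the outcomes, so conditional on $\boldsymbol{X}$ the assignment $\boldsymbol{T}$ is independent of $(\epsilon_1,\dots,\epsilon_n)$; the same holds under CR. Hence, by the law of total covariance, the cross term between $\boldsymbol{\beta}^{T}(\boldsymbol{\bar{x}}_1-\boldsymbol{\bar{x}}_2)$ and $\bar{\epsilon}_1-\bar{\epsilon}_2$ has zero conditional covariance, and since $(2T_i-1)^2\equiv 1$ one gets $\textup{Var}[\bar{\epsilon}_1-\bar{\epsilon}_2\mid\boldsymbol{X},\cdot]=4\sigma_\epsilon^2/n$ regardless of the randomization method. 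Combining this with $\mathbb{E}[\hat{\tau}\mid\boldsymbol{X}]=\tau$ (Theorem \ref{thm:treatment_effect_consistency}, and the analogous unbiasedness for CR) yields
\[
\textup{Var}[\hat{\tau}\mid\boldsymbol{X},\cdot]=\boldsymbol{\beta}^{T}\,\textup{cov}[\boldsymbol{\bar{x}}_1-\boldsymbol{\bar{x}}_2\mid\boldsymbol{X},\cdot]\,\boldsymbol{\beta}+\frac{4\sigma_\epsilon^2}{n}.
\]

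Next I would evaluate this for the two methods. For complete randomization, a finite-population computation (as in \citet{Morgan2012}) gives $\textup{cov}[\boldsymbol{\bar{x}}_1-\boldsymbol{\bar{x}}_2\mid\boldsymbol{X},\textup{CR}]=(4/n)\boldsymbol{S}$, with $\boldsymbol{S}$ the sample covariance matrix of the covariates, and Theorem \ref{thm:covariance_covariate} scales this by $u_{n}$ for CAM. Substituting into the definition of PRIV and cancelling the common factor $4/n$ gives
\[
\textup{PRIV}_{\textup{CAM}}=100\,(1-u_{n})\,\frac{\boldsymbol{\beta}^{T}\boldsymbol{S}\,\boldsymbol{\beta}}{\boldsymbol{\beta}^{T}\boldsymbol{S}\,\boldsymbol{\beta}+\sigma_\epsilon^2}.
\]
It then remains to identify the last ratio with $R^2$: within a treatment group $y_i=(\textup{const})+\boldsymbol{\beta}^{T}\boldsymbol{x}_i+\epsilon_i$ with $\epsilon_i$ independent of $\boldsymbol{x}_i$, so the squared multiple correlation between $y_i$ and $\boldsymbol{x}_i$ equals $\boldsymbol{\beta}^{T}\textup{cov}(\boldsymbol{x})\boldsymbol{\beta}/(\boldsymbol{\beta}^{T}\textup{cov}(\boldsymbol{x})\boldsymbol{\beta}+\sigma_\epsilon^2)$; replacing $\textup{cov}(\boldsymbol{x})$ by its consistent estimate $\boldsymbol{S}$ (equivalently, letting $n\to\infty$, which also sends $u_{n}\to 0$ by Theorem \ref{thm:covariance_covariate}) produces exactly $100(1-u_{n})R^2$.

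The main obstacle is the bookkeeping about what is held fixed. Showing that $\bar{\epsilon}_1-\bar{\epsilon}_2$ contributes the same $4\sigma_\epsilon^2/n$ under CAM and CR rests on the conditional independence of $\boldsymbol{T}$ and the errors given $\boldsymbol{X}$ — a consequence of the algorithm ignoring outcomes, together with the additivity assumption — and this must be argued carefully since $\boldsymbol{T}$ is \emph{not} independent of $\boldsymbol{X}$ under CAM. The other delicate point is the identification of $R^2$: whether it is read as the population squared multiple correlation or its sample analogue determines whether the identity is exact or holds asymptotically, and this is precisely where the normality assumption and the asymptotic reading of the PRIV enter. The remaining ingredients — the finite-population variance formula for CR and the algebraic simplification of the PRIV ratio — are routine.
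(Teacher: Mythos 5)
The paper relegates this proof entirely to the supplementary materials, so there is no in-text argument to compare against; judged on its own, your plan is sound and is the natural adaptation of Morgan and Rubin's argument for rerandomization (where the PRIV is $100(1-v_a)R^2$), with Theorem \ref{thm:covariance_covariate} supplying the factor $u_{n}$ in place of $v_a$ --- which is evidently what the authors intend, since they describe the result as ``the PRIV of the covariates scaled by $R^2$.'' The one point to tighten is your starting step: the paper explicitly remarks that Theorem \ref{thm:percent_reduction_variance} does \emph{not} assume the linear regression model \eqref{eq:true_model}, so you should not invoke that equation as a hypothesis. Instead, derive the decomposition $y_i(0)=\alpha+\boldsymbol{\beta}^{T}\boldsymbol{x}_i+\epsilon_i$ with $\epsilon_i$ independent of $\boldsymbol{x}_i$ from the assumed joint normality of $(y_i,\boldsymbol{x}_i)$ (the residual of the linear projection of a Gaussian vector onto its regressors is independent of them), and use additivity to carry the same $\boldsymbol{\beta}$ over to $y_i(1)$; from there your argument goes through verbatim. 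Your handling of the genuinely delicate point --- keeping $\boldsymbol{\epsilon}$ random and conditioning only on $\boldsymbol{X}$, so that the independence of $\boldsymbol{T}$ from $\boldsymbol{\epsilon}$ kills the cross terms and yields the common $4\sigma_\epsilon^2/n$ under both CAM and CR --- is exactly right, and is the step that would fail under a fully finite-population conditioning on the realized errors. The remaining items you label routine (the $(4/n)\boldsymbol{S}$ formula for CR and the identification of the ratio with $R^2$, up to the sample-versus-population reading you already flag) are indeed routine.
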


Recall that the PRIV of $\hat{\tau}_{\textup{RR}}$ is $100(1-v_a) R^2$ 
\citep{Morgan2012}, which is a constant and does not depend on the sample 
size.  In contrast, the PRIV of $\hat{\tau}_{\textup{PSR}}$ is 
$100(1-u_{n})R^2$ and converges to $100R^2$ as the sample size $n \to \infty$.  
In fact, the PRIV of $\hat{\tau}_{\textup{PSR}}$ is simply the PRIV of the 
covariates scaled by $R^2$.  
We further plot the PRIVs of $\hat{\tau}_{\textup{PSR}}$ and of 
$\hat{\tau}_{\textup{RR}}$ (with a fixed acceptance probability of $p_a=0.05$) 
in Figure \ref{fig:PRIV_CAM_p_vs_n}.  Note that we let $R^2=1$ in both figures 
only for illustrative purposes (as in \citet{Morgan2012}).  It is evident that 
as $n$ increases, at each value of $p$, the PRIV of $\hat{\tau}_{\textup{PSR}}$ 
increases to $100\%$.  However, the PRIV of $\hat{\tau}_{\textup{RR}}$ at a 
given $p$ does not vary with different $n$.  The advantage of the proposed 
method over rerandomization is clear, especially for large $n$ and large $p$.
\begin{figure}
    \centering
    \begin{subfigure}[b]{0.45\textwidth}
    	\includegraphics[scale=0.45]{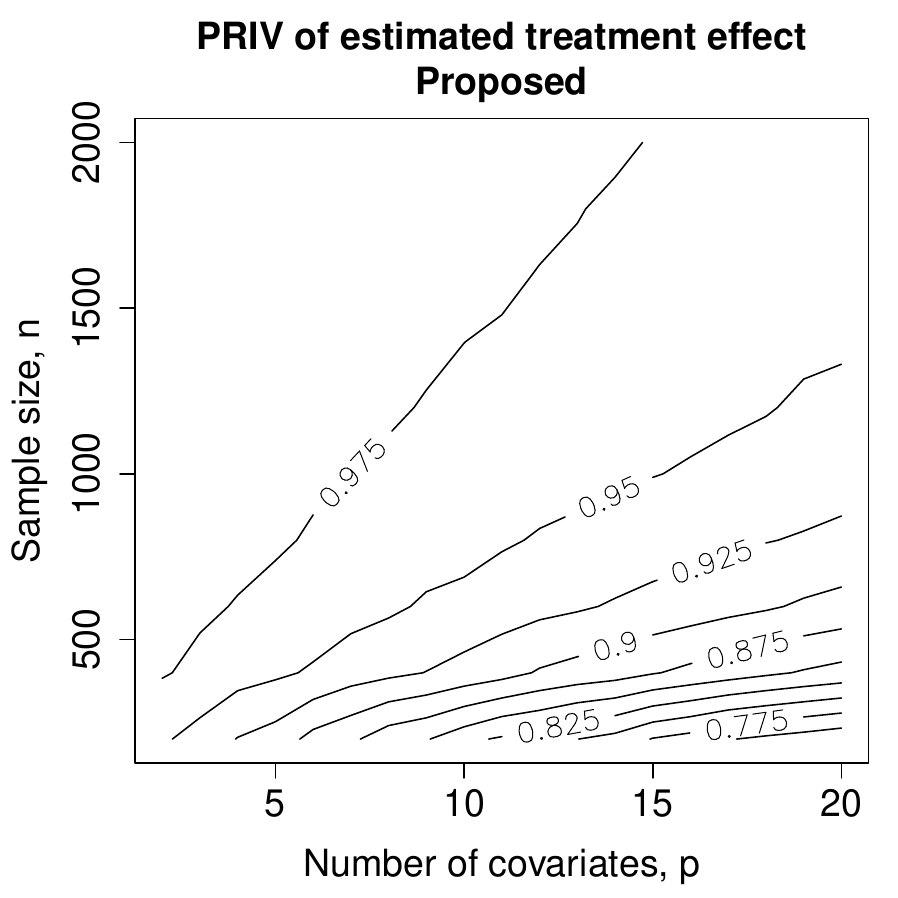}
	    \caption{Proposed method}
    \end{subfigure}
    \begin{subfigure}[b]{0.45\textwidth}
        \includegraphics[scale=0.45]{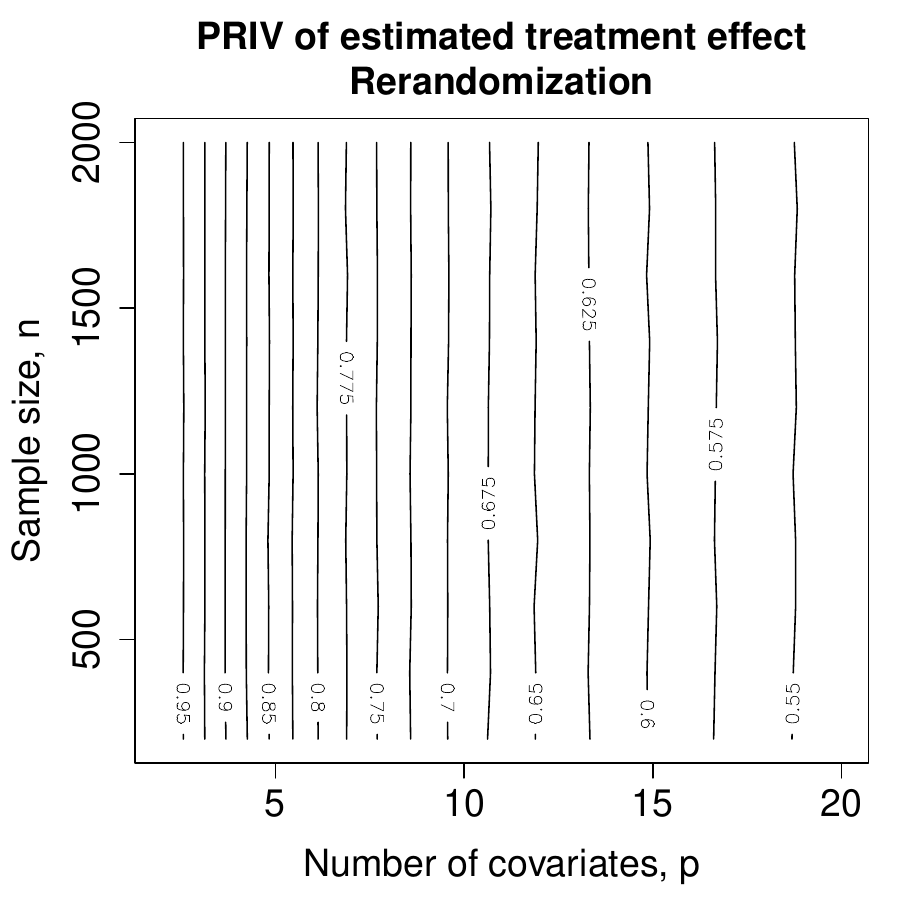}
        \caption{Rerandomization}
    \end{subfigure}
    \caption{The percent reductions in variance of the estimated treatment 
    effect under the proposed method, $\hat{\tau}_{\textup{PSR}}$, and under 
    rerandomization, $\hat{\tau}_{\textup{RR}}$, for various sample sizes and 
    numbers of covariates.  Panel (a): proposed method.  Panel (b): 
    rerandomization.}
    \label{fig:PRIV_CAM_p_vs_n}
\end{figure}

Meanwhile, the percent reduction in variance due to the adjustment via linear regression in complete randomization is $100[(1+M_{\textup{CR}}(n)/n)R^2-M_{\textup{CR}}(n)/n]$ \citep{Cox1982}, which converges to $100R^2$ as $n \to \infty$.  Therefore, we conclude that the proposed method can reduce the asymptotic variance to the minimum level.

In addition, if we further assume that the outcome variable $y_i$ truly follows 
a linear regression model, we can show that $\hat{\tau}_{\textup{PSR}}$ 
achieves the optimal precision even without adjusting for the imbalance in the 
covariates using linear regression.  That is,

\begin{theorem}[Optimal precision]\label{thm:minimum_variance} Suppose that the 
outcome variable $y_i$ follows the linear regression model in Equation 
\eqref{eq:true_model} and that we estimate the treatment effect under the 
proposed method and under complete randomization; then, we have
\begin{align*}
\sqrt{n}\big(\hat{\tau}_{\textup{PSR}}-(\mu_1-\mu_2)\big)&\overset{D}{\rightarrow}
 N(0,V_1),\\
\sqrt{n}\big(\tilde{\tau}_{\textup{PSR}}-(\mu_1-\mu_2)\big)&\overset{D}{\rightarrow}
 N(0,V_2),\\
\sqrt{n}\big(\tilde{\tau}_{\textup{CR}}-(\mu_1-\mu_2)\big)&\overset{D}{\rightarrow}
 N(0,V_3),\\
\sqrt{n}\big(\hat{\tau}_{\textup{CR}}-(\mu_1-\mu_2)\big)&\overset{D}{\rightarrow} N(0,V_4),
\end{align*}
where $4\sigma^2_{\epsilon}=V_1=V_2=V_3<V_4$.
\end{theorem}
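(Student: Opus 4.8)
The plan is to write each estimator as its target plus a covariate-imbalance term plus an i.i.d.-error term, and to show that under the proposed method the imbalance term is negligible after scaling by $\sqrt{n}$. Throughout I write $\boldsymbol{\bar{x}}_1,\boldsymbol{\bar{x}}_2$ for the within-group covariate sample means, $\boldsymbol{\bar{x}}=(\boldsymbol{\bar{x}}_1+\boldsymbol{\bar{x}}_2)/2$ for the overall covariate mean, and recall that under the proposed method $p_n=1/2$, so $n_1=n_2=n/2$. Substituting the true model \eqref{eq:true_model} into $\hat\tau=\bar{y}_1-\bar{y}_2$ and using $n_1=n_2$ gives, for either design,
\begin{align*}
\hat\tau-(\mu_1-\mu_2)=\boldsymbol{\beta}^{T}(\boldsymbol{\bar{x}}_1-\boldsymbol{\bar{x}}_2)+(\bar\epsilon_1-\bar\epsilon_2),\qquad \bar\epsilon_1-\bar\epsilon_2=\frac{2}{n}\sum_{i=1}^{n}(2T_i-1)\epsilon_i .
\end{align*}
Since the $\epsilon_i$ are i.i.d.\ with mean zero and variance $\sigma_\epsilon^2$ and are independent of $(\{\boldsymbol{x}_j\},\{T_j\})$, I would condition on $\{T_i\}$: then $\sqrt{n}(\bar\epsilon_1-\bar\epsilon_2)=\frac{2}{\sqrt{n}}\sum_i(2T_i-1)\epsilon_i$ is a normalized sum of independent mean-zero terms of total variance $4\sigma_\epsilon^2$, the Lindeberg condition holds because $\sigma_\epsilon^2<\infty$, and the limit $N(0,4\sigma_\epsilon^2)$ is free of $\{T_i\}$, so $\sqrt{n}(\bar\epsilon_1-\bar\epsilon_2)\overset{D}{\rightarrow}N(0,4\sigma_\epsilon^2)$ unconditionally. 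This supplies the common limit for $V_1,V_2,V_3$.

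For $V_1$ and $V_4$ I would use the imbalance term. With $p_n=1/2$ the Mahalanobis distance is $M(n)=\frac{n}{4}(\boldsymbol{\bar{x}}_1-\boldsymbol{\bar{x}}_2)^{T}\textup{cov}(\boldsymbol{x})^{-1}(\boldsymbol{\bar{x}}_1-\boldsymbol{\bar{x}}_2)$, so Theorem \ref{thm:MahaDistLimitDist} gives $(\boldsymbol{\bar{x}}_1-\boldsymbol{\bar{x}}_2)^{T}\textup{cov}(\boldsymbol{x})^{-1}(\boldsymbol{\bar{x}}_1-\boldsymbol{\bar{x}}_2)=O_p(n^{-2})$; as $\textup{cov}(\boldsymbol{x})$ is positive definite, $\|\boldsymbol{\bar{x}}_1-\boldsymbol{\bar{x}}_2\|=O_p(n^{-1})$ and $\sqrt{n}\,\boldsymbol{\beta}^{T}(\boldsymbol{\bar{x}}_1-\boldsymbol{\bar{x}}_2)=o_p(1)$, whence Slutsky's theorem gives $\sqrt{n}(\hat\tau_{\textup{CAM}}-(\mu_1-\mu_2))\overset{D}{\rightarrow}N(0,4\sigma_\epsilon^2)$, i.e.\ $V_1=4\sigma_\epsilon^2$. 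Under complete randomization the imbalance is of order $n^{-1/2}$ instead: conditioning on $\boldsymbol{X}$, complete randomization with $n_1=n_2$ gives $n\,\textup{cov}(\boldsymbol{\bar{x}}_1-\boldsymbol{\bar{x}}_2\mid\boldsymbol{X})\overset{p}{\rightarrow}4\,\textup{cov}(\boldsymbol{x})$, which together with a finite-population CLT yields $\sqrt{n}(\boldsymbol{\bar{x}}_1-\boldsymbol{\bar{x}}_2)\overset{D}{\rightarrow}N(\boldsymbol{0},4\,\textup{cov}(\boldsymbol{x}))$; since $\{\epsilon_i\}$ is independent of $(\boldsymbol{X},\boldsymbol{T})$, a characteristic-function (conditioning) argument combines this with the first display to give $\sqrt{n}(\hat\tau_{\textup{CR}}-(\mu_1-\mu_2))\overset{D}{\rightarrow}N(0,\,4\sigma_\epsilon^2+4\boldsymbol{\beta}^{T}\textup{cov}(\boldsymbol{x})\boldsymbol{\beta})$, so $V_4=4\sigma_\epsilon^2+4\boldsymbol{\beta}^{T}\textup{cov}(\boldsymbol{x})\boldsymbol{\beta}$. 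Positive-definiteness of $\textup{cov}(\boldsymbol{x})$ and $\boldsymbol{\beta}\ne\boldsymbol{0}$ (implicit in the statement) then give $V_4>V_1$.

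For $V_2$ and $V_3$ I would argue directly from the least-squares formula. Under the true model $\boldsymbol{L}^{T}\boldsymbol{\beta}^{*}=\mu_1-\mu_2$, so $\tilde\tau-(\mu_1-\mu_2)=\boldsymbol{L}^{T}(\widetilde{\boldsymbol{X}}^{T}\widetilde{\boldsymbol{X}})^{-1}\widetilde{\boldsymbol{X}}^{T}\boldsymbol{\epsilon}$, which conditional on $\widetilde{\boldsymbol{X}}$ has mean zero and variance $\sigma_\epsilon^2\,\boldsymbol{L}^{T}(\widetilde{\boldsymbol{X}}^{T}\widetilde{\boldsymbol{X}})^{-1}\boldsymbol{L}$. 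I would evaluate this scalar by block inversion with blocks indexed by $\widetilde{\boldsymbol{T}}$ and $\boldsymbol{X}$: using $\widetilde{\boldsymbol{T}}^{T}\widetilde{\boldsymbol{T}}=\frac{n}{2}\boldsymbol{I}_2$, $\widetilde{\boldsymbol{T}}^{T}\boldsymbol{X}=\frac{n}{2}(\boldsymbol{\bar{x}}_1,\boldsymbol{\bar{x}}_2)^{T}$, and the identity $\boldsymbol{\bar{x}}_1+\boldsymbol{\bar{x}}_2=2\boldsymbol{\bar{x}}$, a computation of the $2\times2$ Schur complement $\widetilde{\boldsymbol{T}}^{T}\widetilde{\boldsymbol{T}}-\widetilde{\boldsymbol{T}}^{T}\boldsymbol{X}(\boldsymbol{X}^{T}\boldsymbol{X})^{-1}\boldsymbol{X}^{T}\widetilde{\boldsymbol{T}}$ and of $\boldsymbol{L}^{T}$ contracted against its inverse should give $\boldsymbol{L}^{T}(\widetilde{\boldsymbol{X}}^{T}\widetilde{\boldsymbol{X}})^{-1}\boldsymbol{L}=\frac{4}{n}(1+o_p(1))$, the error being controlled by $\boldsymbol{\bar{x}}=O_p(n^{-1/2})$ and $n^{-1}\boldsymbol{X}^{T}\boldsymbol{X}\overset{p}{\rightarrow}\textup{cov}(\boldsymbol{x})$. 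A conditional Lindeberg CLT given $\widetilde{\boldsymbol{X}}$ then yields $\sqrt{n}(\tilde\tau-(\mu_1-\mu_2))\overset{D}{\rightarrow}N(0,4\sigma_\epsilon^2)$. The only inputs used here are that the \emph{overall} covariate mean is $O_p(n^{-1/2})$ and that $n^{-1}\boldsymbol{X}^{T}\boldsymbol{X}$ converges, which hold under both the proposed method and complete randomization; hence $V_2=V_3=4\sigma_\epsilon^2$, and assembling the four limits gives $4\sigma_\epsilon^2=V_1=V_2=V_3<V_4$.

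I expect the main obstacle to be the block inversion for $V_2$ and $V_3$ — carefully tracking the stochastic orders of the Schur-complement entries so as to conclude $n\,\boldsymbol{L}^{T}(\widetilde{\boldsymbol{X}}^{T}\widetilde{\boldsymbol{X}})^{-1}\boldsymbol{L}\overset{p}{\rightarrow}4$ — together with the bookkeeping for the conditional Lindeberg conditions and the invocation of a finite-population CLT for complete randomization. Given Theorem \ref{thm:MahaDistLimitDist}, the remaining steps are routine.
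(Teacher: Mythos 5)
Your proposal is correct and follows essentially the same route as the paper's own proof of the $\hat{\tau}_{\textup{CAM}}$ limit: write the estimator as its target plus a covariate-imbalance term, which Theorem \ref{thm:MahaDistLimitDist} makes $O_p(n^{-1})$ and hence negligible after $\sqrt{n}$ scaling, plus an i.i.d.\ error average handled by a CLT; your direct mean-difference decomposition simply bypasses the paper's $\boldsymbol{A}+\boldsymbol{B}$ splitting of the regression representation, which contributes nothing extra here since $\widetilde{\boldsymbol{T}}^T\widetilde{\boldsymbol{T}}/n$ equals its limit exactly under CAM. The paper defers $V_2$, $V_3$, $V_4$ to supplementary material, and your Schur-complement argument for the adjusted estimators and your complete-randomization computation of $V_4=4\sigma^2_{\epsilon}+4\boldsymbol{\beta}^T\textup{cov}(\boldsymbol{x})\boldsymbol{\beta}$ are sound, with the caveat you already note that the strict inequality $V_1<V_4$ requires $\boldsymbol{\beta}\neq\boldsymbol{0}$.
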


This theorem implies that under the proposed method, the precision of the 
estimated treatment effect obtained using a simple sample mean 
difference, $\hat{\tau}_{\textup{PSR}}$, is the same as the precision of the 
estimate obtained through a linear regression which adjusts for the covariate 
imbalance, $\tilde{\tau}_{\textup{PSR}}$.  This suggests that the regression 
adjustment would not be necessary under the proposed method.
In other words, the proposed method can balance the 
covariates so well that, asymptotically, the simple sample mean difference 
$\hat{\tau}_{\textup{PSR}}$ is just as good as the linear-regression-adjusted 
estimate $\tilde{\tau}_{\textup{PSR}}$.

Furthermore, the theorem also implies that the precision of 
$\hat{\tau}_{\textup{PSR}}$ is the same as 
the precision of the estimated treatment effect obtained from a linear 
regression under complete randomization, $\tilde{\tau}_{\textup{CR}}$, which is 
considered optimal.  
Therefore, we conclude that the $\hat{\tau}_{\textup{PSR}}$ attains optimal 
precision.
Although $\tilde{\tau}_{\textup{CR}}$ and $\hat{\tau}_{\textup{PSR}}$ have the 
same precision, it is worth noting that to calculate 
$\tilde{\tau}_{\textup{CR}}$, 
it is necessary to estimate all regression coefficients $\boldsymbol{\beta}^*$, 
whereas $\hat{\tau}_{\textup{PSR}}$ is simply the sample mean difference and 
does not require the estimation of any additional coefficients.

Similarly, we present the properties of $\hat{\tau}_{\textup{RR}}$ and 
$\tilde{\tau}_{\textup{RR}}$ for comparison.  Note that all properties are 
derived under the proposed framework which is different from the framework in
\citet{Morgan2012}.  
\begin{corollary}\label{thm:RR_variance} Under the same assumptions in 
Theorem \ref{thm:minimum_variance}, suppose that we estimate the treatment 
effect under the rerandomization; then, we have
\begin{align*}
\sqrt{n}\big(\tilde{\tau}_{\textup{RR}}-(\mu_1-\mu_2)\big)&\overset{D}{\rightarrow}
 N(0,V_5),\\
\sqrt{n}\big(\hat{\tau}_{\textup{RR}}-(\mu_1-\mu_2)\big)&\overset{D}{\rightarrow}
 N(0,V_6),
\end{align*}
where $4\sigma^2_{\epsilon}=V_1=V_2=V_3=V_5<V_6<V_4$.
\end{corollary}

From the theorem above, we conclude that rerandomization cannot achieve the optimal precision in contrast to the proposed method.  
It cannot completely remove the covariate imbalance either.
In Table \ref{tab:thm_demo}, we summarize the relationships of the 
asymptotic variances of the different estimates presented by this article.  

\begin{table}
\begin{center}
\begin{tabular}{ c | c | c  c  c}
\hline
\hline
Randomized	& Randomization & \multicolumn{3}{ |c 
}{Working model for estimating $\mu_1-\mu_2$}\\
Covariates & Method	& $\texttt{lm(}\boldsymbol{Y} \sim 
\boldsymbol{\widetilde{T}}\texttt{)}$	&	& $\texttt{lm(}\boldsymbol{Y} \sim 
\widetilde{\boldsymbol{T}} + \boldsymbol{X}\texttt{)}$\\
\hline
\multirow{5}{*}{$\boldsymbol{X}$}	& CR	& Asym. Var.& $>$	& Asym. Var.\\
 &	& $\lor$	&	& $\|$\\
 & RR & Asym. Var.	&	& Asym. Var. \\
 &	& $\lor$	&	& $\|$\\
 & Proposed	& Asym. Var. & $=$& Asym. Var.\\
\hline
\end{tabular}
\caption{Demonstration of the relationship of asymptotic variances of different 
estimates.  All results are derived under the proposed framework.}
\label{tab:thm_demo}
\end{center}
\end{table}

\subsection{Computational Advantage}\label{sec:Computational_Time}
The previous section clearly demonstrates the advantages of the proposed 
method.  A natural question is whether we can also let 
$v_a \to 0$ in the rerandomization to improve its performance to match 
that of the proposed method (because rerandomization allows researchers to 
increase the power of the analysis at the expense of computational time 
\citep{Morgan2012}).  However, this option is extremely computationally 
expensive in many cases, as illustrated below.

\begin{theorem}\label{thm:acceptance_prob}
For rerandomization, to achieve the same level of covariate balance 
of the pairwise sequential randomization (PSR) (i.e., the average Mahalanobis distance under the 
PSR), the acceptance probability $p_a$ of rerandomization is 
$\chi^2_{df=p}(a^*)$, where 
$\chi^2_{df=p}(\cdot)$ is the cumulative distribution function of a Chi-square 
distribution with $p$ degrees of freedom, and $a^*$ is the root of 
$\gamma(p/2,a^*/2)Dp^2=2\gamma(p/2+1,a^*/2)n$ where $D>0$ is a constant and 
$\gamma(w,t)=\int_{0}^{t} x^{w-1} \exp\{-x\} dx$ is the incomplete gamma 
function.
\end{theorem}

%

We report the acceptance probabilities for several scenarios as quantitative 
values in Table \ref{tab:accept_prob}.  
As we can see, for a small sample size and low-dimensional covariates, the 
acceptance probability are reasonable. 
However, as either $p$ and $n$ increase, the acceptance probability approaches 
0 very fast.

\begin{table}
\begin{center}
\begin{tabular}{ c | c c c c c}
\hline
\hline
$n$ &  $p=2$  &   $p=5$  &   $p=10$   &   $p=20$    &    $p=30$ \\
\hline
1000 & 0.019360138 & 5.889118e-04 & 1.366763e-05 & 2.041414e-07 & 2.886993e-08\\
2000 & 0.009504544 & 1.058795e-04 & 4.742458e-07 & 3.091250e-10 & 2.424319e-12\\
3000 & 0.006528596 & 3.886533e-05 & 6.451756e-08 & 6.184287e-12 & 7.804135e-15\\
\hline
\end{tabular}
\caption{Acceptance probabilities of rerandomization to match the covariate 
balance produced by the proposed method for different levels of $n$ and $p$.}
\label{tab:accept_prob}
\end{center}
\end{table}

Suppose that the time to allocate one additional unit by the 
proposed method is $C(p)$ and that the time to allocate one additional unit by 
complete randomization is $R>0$.  
Note that complete randomization is not covariate-adaptive, therefore $R$ does 
not depend on $p$.
Suppose that the time to evaluate the Mahalanobis distance is $E(p)$.
Then, we have the following corollary.

\begin{corollary}\label{thm:computational_time_ratio}
To achieve the same level of covariate balance, the ratio of the average 
computational time of the proposed method to the average computational time of 
the rerandomization method is proportional to $\chi^2_{df=p}(a^*)C(p)/[E(p)R]$.
\end{corollary}

Because of the unknown properties of $C(p)$ and $E(p)$, we are unable to 
demonstrate the ratio of computational times as we did in 
Table~\ref{tab:accept_prob} for acceptance probability.  
However, we have conducted extensive simulation 
studies in the next section to demonstrate the computational advantages of the 
proposed method.

\section{Numerical Studies}\label{sec:Numerical_Study}

In this section, we perform simulation studies to verify the theoretical 
results and demonstrate the advantages of the proposed method.

\subsection{Convergence Rate}

First, we perform a simple experiment to verify the rate of convergence 
stated in Theorem \ref{thm:MahaDistLimitDist}.  
We simulate the unit's covariate $\boldsymbol{x}$ according to multivariate 
normal 
distribution $\boldsymbol{x} \sim \text{MN}(0,\boldsymbol{I})$.
Using different numbers of covariates $p$, we simulate sequences of units, 
assign them to treatment groups and record the corresponding sequences of 
Mahalanobis distances.
We repeat the procedure for 5000 times and plot the average Mahalanobis 
distance against the reciprocal of the sample size ($1/n$) in Figure 
\ref{fig:rate_M_vs_n}.  
It is clear that the expected Mahalanobis distance 
converges to 0 at the rate of $1/n$, as evidenced by the straight lines.
\begin{figure}  \centering
    \includegraphics[scale=0.4]{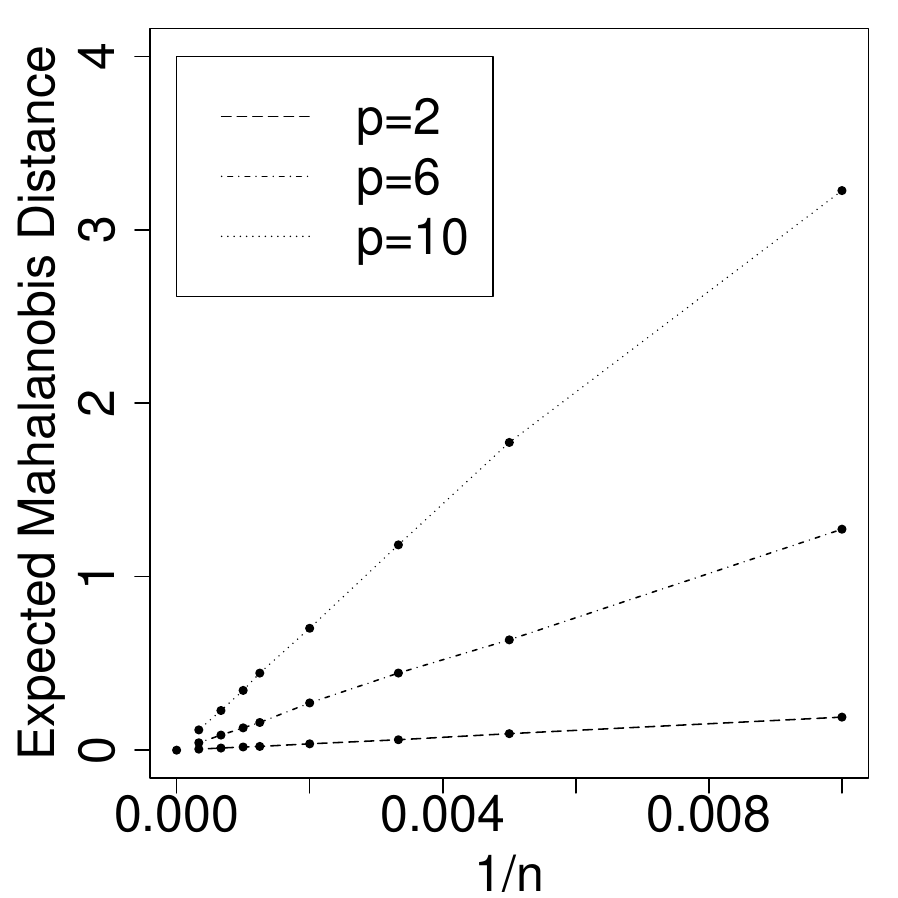}
    \caption{Verification of the rate of convergence of $M(n)$ using the 
    proposed method.}
    \label{fig:rate_M_vs_n}
\end{figure}


\subsection{Pairwise sequential randomization under Different Settings}

We also demonstrate the performance of the PSR under two different 
scenarios: (1) all units are available for assignment before the randomization 
starts, such as causal inference studies, (2) units come to the study 
sequentially and are assigned to treatment 
sequentially, such as clinical trial studies.  
In both cases, we can adopt the proposed method, the 
only difference is the number of burn in and the calculation of the sample 
covariance matrix as explained in Section \ref{sec:proposed_clinical_trial}.

We simulate the unit's covariate $\boldsymbol{x}$ according to multivariate 
normal distribution $\boldsymbol{x} \sim \text{MN}(0,\boldsymbol{I})$.
Using different $p$s and $n$s, we simulate these units, assign them to 
treatment groups and record the final Mahalanobis distances.
We plot the distributions of the Mahalanobis distance in Figure 
\ref{fig:two_scenarios}.
As the figure shows, the distributions of the Mahalanobis distance under these 
two scenarios are almost identical, especially when the sample sizes are 
large.
This is because as more and more units are assigned, the sample covariance 
matrix converges and the behaviors of Mahalanobis distance are the same for 
these two scenarios.
This simulation study verifies the applicability of the proposed method in both 
two scenarios, i.e., causal inference and clinical trial studies.

\begin{figure}[t]
    \centering
    \includegraphics[scale=0.6]{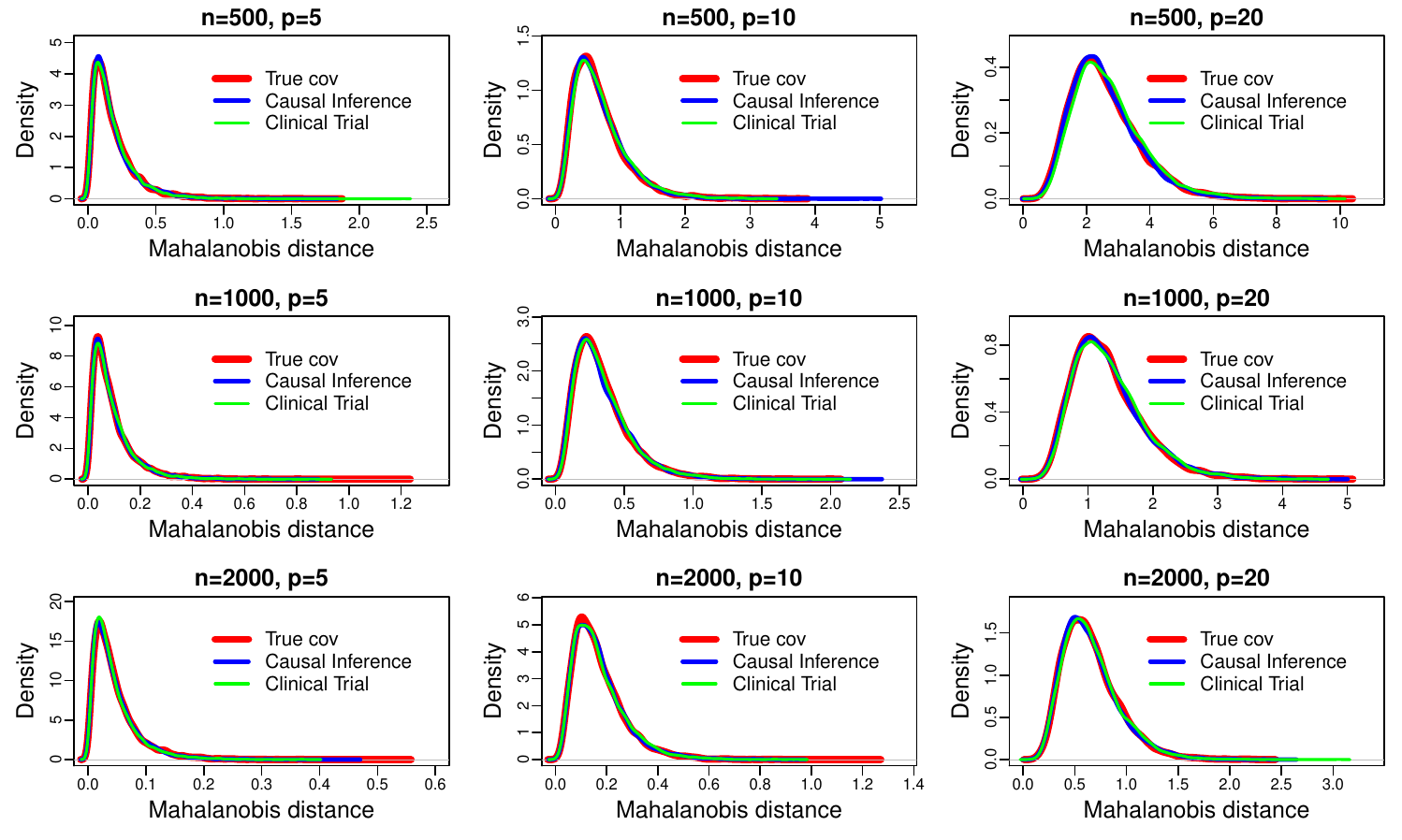}
    \caption{Comparison of the distributions of the Mahalanobis distances 
    obtained via the proposed method, $M(n)$, under three scenarios.
    Red curves are for true covariance is known.
    Blue curves are for causal inference.
    Green curves are for clinical trial.}
    \label{fig:two_scenarios}
\end{figure}

\subsection{Covariate Balance and Computational Advantage}

In this section, we compare the proposed method with other methods, especially 
rerandomization, in terms of covariate balance and computational feasibility.

We first compare the proposed method with rerandomization (with $p_a=0.05$) by 
simulating the covariates with $\boldsymbol{x} \sim 
\text{MN}(0,\boldsymbol{I})$; the results are presented in 
Figure \ref{fig:comparison_maha_dist_n_p}.  For different $n$s and $p$s, we 
plot the histograms of $M(n)$ of the proposed method and $M_{{\textup{RR}}}(n)$ 
of rerandomization.  
As the figure shows, 
as $n$ increases, the distribution of $M_{\textup{RR}} (n)$ remains unchanged, 
whereas the distribution of $M(n)$ rapidly converges to 0.  
Moreover, as $p$ increases, the distributions obtained through rerandomization 
and the proposed method become wider, but the inflation of distribution is much 
less severe for the proposed method (i.e., the overlap between the two 
distributions becomes smaller as $p$ increases).
\begin{figure}[t]
    \centering
    \includegraphics[scale=0.6]{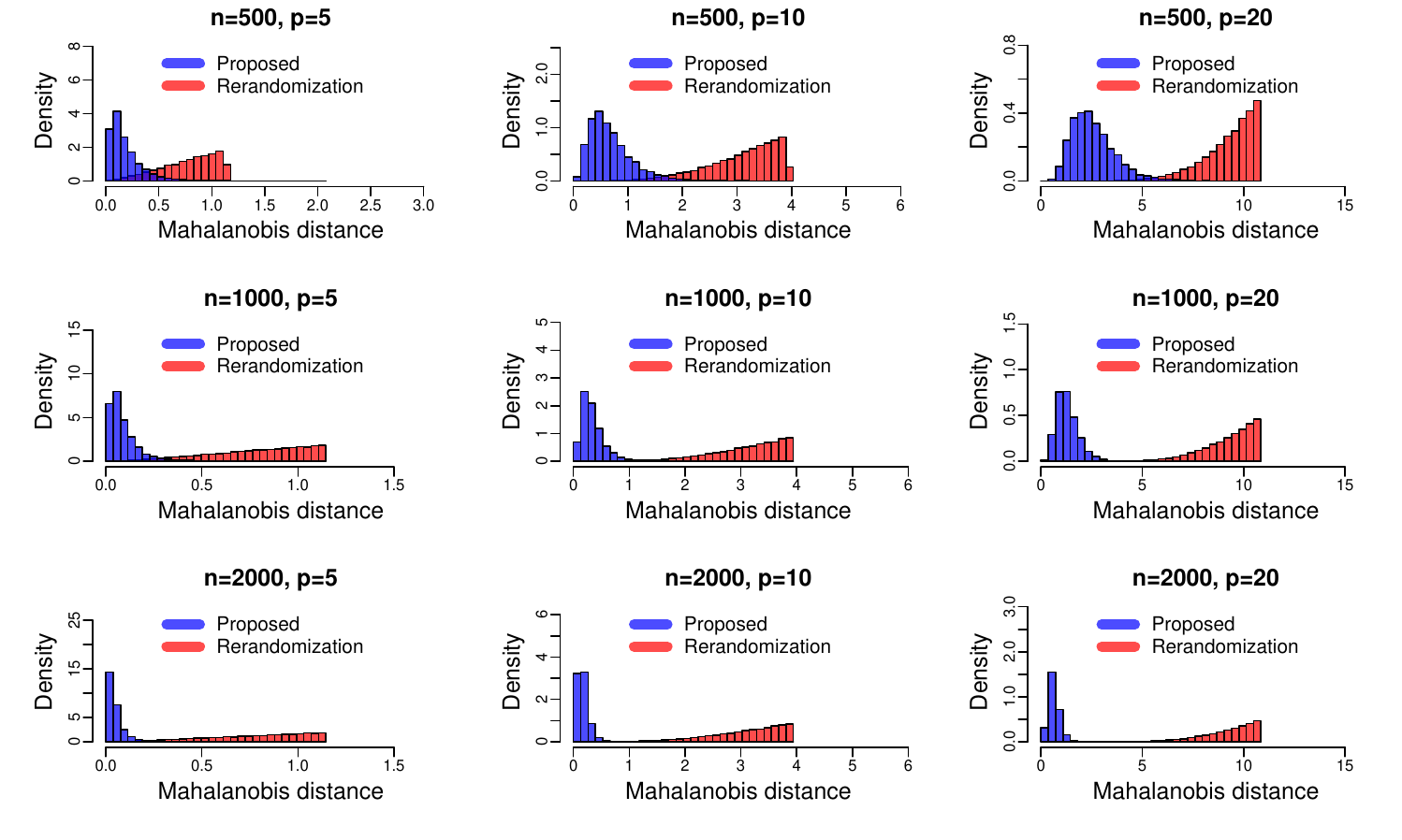}
    \caption{Comparison of the distributions of the Mahalanobis distances 
    obtained via the proposed method, $M(n)$, and rerandomization, 
    $M_{\textup{RR}}(n)$, for different sample sizes $n$ and different numbers 
    of covariates $p$.}
    \label{fig:comparison_maha_dist_n_p}
\end{figure}

Next, we compare the proposed method with rerandomization in terms 
of computational times.  Note that the proposed method only 
requires one iteration, whereas rerandomization requires multiple iterations of 
complete randomization to achieve an acceptable balance level.  
Therefore, we compared the number of 
iterations required for rerandomization to achieve the same performance (same 
Mahalanobis distance) as the proposed method.  In addition, we also compared 
the corresponding computational times.  The results are shown in Figure 
\ref{fig:comparison_computational}.  As seen in Figures 
\ref{fig:comparison_computational}a and \ref{fig:comparison_computational}b, 
when $n$ and $p$ are small, the computational 
advantage of the proposed method is not obvious.  As $n$ and $p$ increase, 
however, the proposed method gradually shows a significant advantage over 
rerandomization, because more iterations and more time are required for 
rerandomization in order to achieve the same level of performance as the 
proposed method.  As $p$ continues to increase, 
rerandomization will eventually become very computationally expensive.  In 
other words, it is nearly impossible for rerandomization to achieve the same 
performance as the proposed method.  
Note that the computational time of the proposed method grows 
only linearly with $n$ and remains the same for different $p$s, whereas the 
computational time of rerandomization grows exponentially as either $n$ or $p$ 
increases.

\begin{figure}[t]
    \centering
    \begin{subfigure}[b]{0.32\textwidth}
        \includegraphics[scale=0.28]{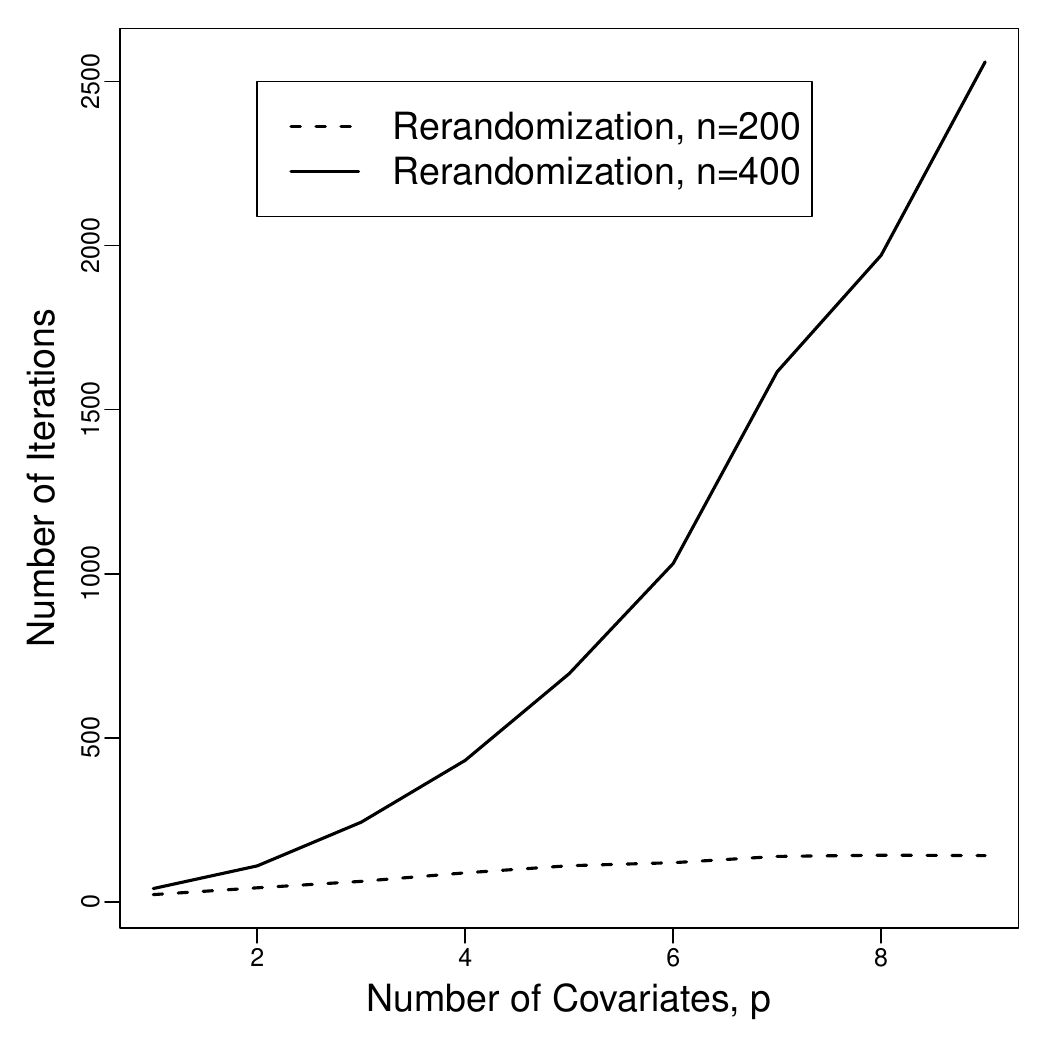}
        \caption{Number of iterations}
        \label{fig:comp_iter}
    \end{subfigure}
    \begin{subfigure}[b]{0.32\textwidth}
        \includegraphics[scale=0.28]{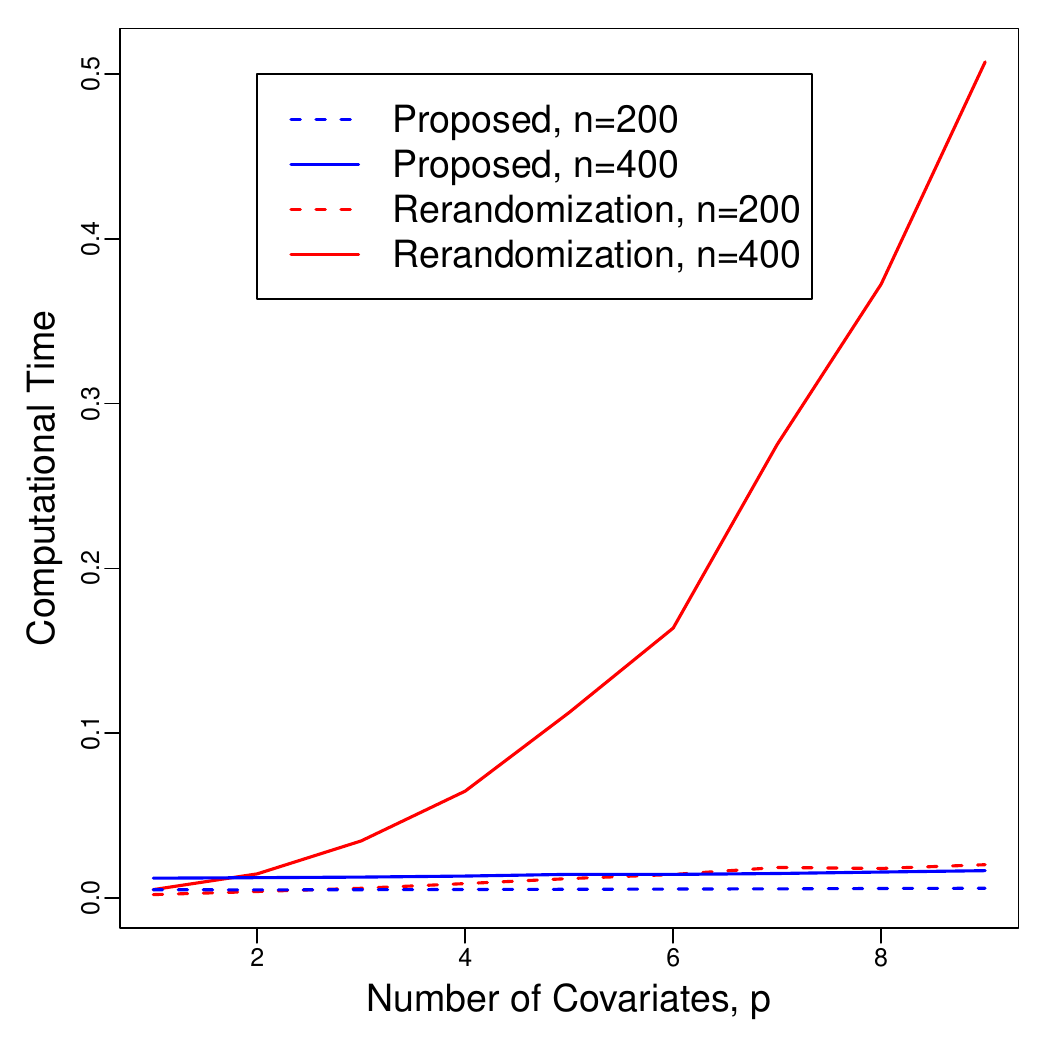}
        \caption{Computational time}
        \label{fig:comp_time}
    \end{subfigure}
    \begin{subfigure}[b]{0.32\textwidth}
        \includegraphics[scale=0.28]{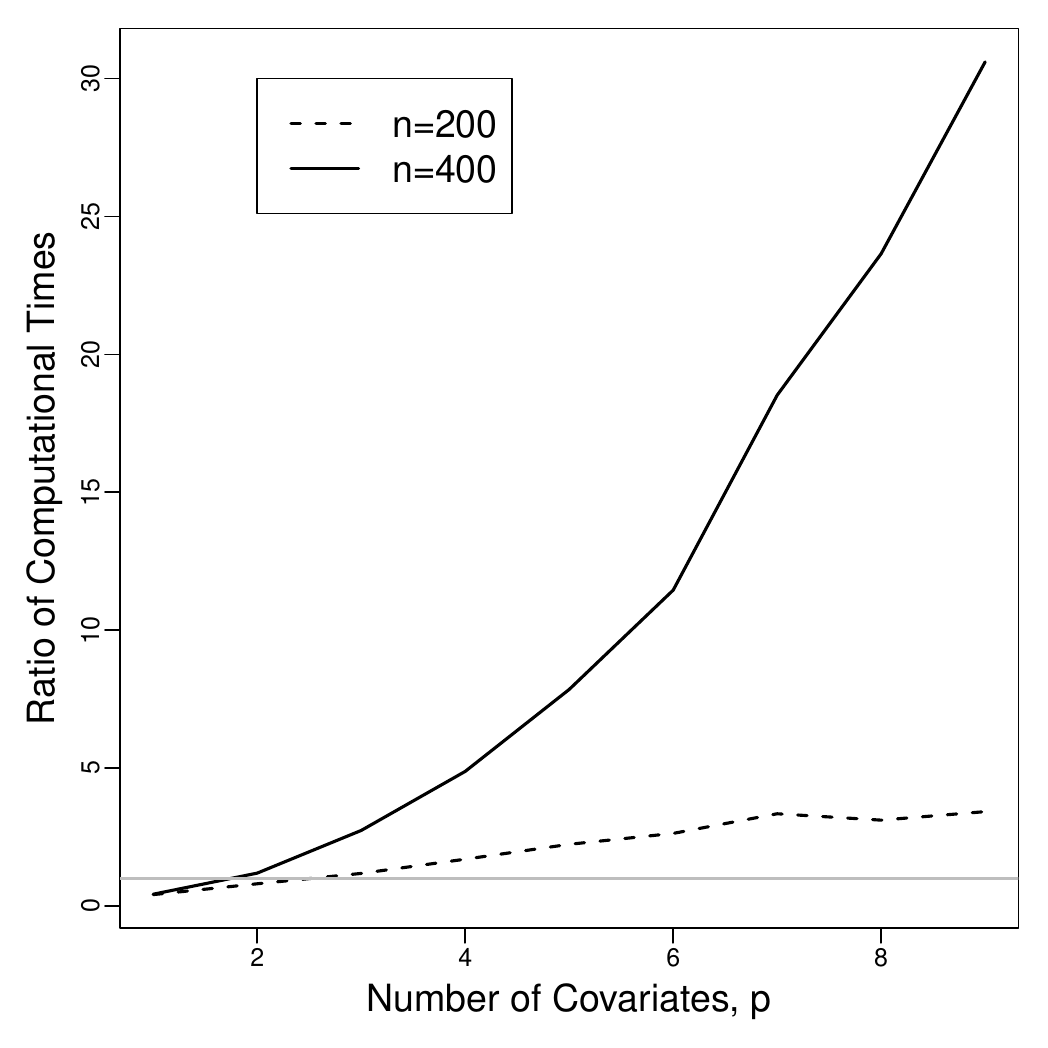}
        \caption{Ratio of times}
        \label{fig:comp_ratio}
    \end{subfigure}
    \caption{Comparison of the numbers of iterations, the computational times, 
    and the ratios of computational times for rerandomization and the proposed 
    method.  
    Panel (a):  numbers of iterations of rerandomization required to achieve 
    the same performance as the proposed method.  
    Panel (b): the corresponding computational times used in Panel (a).  
    Panel (c): the ratios of computational times shown in Panel (b).}
    \label{fig:comparison_computational}
\end{figure}

\subsection{Treatment Effect Estimation}

Finally, we compare the proposed method with other randomization 
methods in terms of estimating the treatment effect. 
We simulate ten continuous covariates $\boldsymbol{x}_i=(x_{i1},...,x_{i10})^T$ 
according to $\boldsymbol{x}_i \sim \text{MN}(\boldsymbol{0},\boldsymbol{I}_{10 
\times 10})$ with sample size of 5000.
We applied the proposed method, rerandomization and 
complete randomization to these simulated units and obtained the simulated 
treatment assignments $T_i$. 
We further simulate the outcome variable according to $y_i=\mu_1 T_i + \mu_2 
(1-T_i)+ \sum_{j=1}^{10}\beta_j x_{ij}  + \epsilon_i$, where $\mu_1=0$, 
$\mu_2=1$, 
$\beta_j=1$ for $j=1,...,10$ and $\epsilon_i \sim N(0,2^2)$.

Using the simulated data, we estimate the treatment effect using four different working models and obtain the standard error for each method under different randomization methods.
\begin{itemize}
\item[W1:]
$y_i=\mu_1 T_i + \mu_2 (1-T_i)+\epsilon_i$
\item[W2:]
$y_i=\mu_1 T_i + \mu_2 (1-T_i)+\sum_{j=1}^{3} \beta_j x_{ij} + \epsilon_i$
\item[W3:]
$y_i=\mu_1 T_i + \mu_2 (1-T_i)+\sum_{j=4}^{10} \beta_j x_{ij} + \epsilon_i$
\item[W4:]
$y_i=\mu_1 T_i + \mu_2 (1-T_i)+\sum_{j=1}^{10} \beta_j x_{ij} + \epsilon_i$
\end{itemize}
Note that W1 is equivalent to the sample mean difference $\hat{\tau}$.  
The results are presented in Table \ref{tab:thm_simu}, which is consistent with 
Table \ref{tab:thm_demo}.
As we can see, proposed method obtain the smallest standard errors among all methods.
Rerandomization performs well, but its standard errors 
are significantly larger than these of the proposed method.
Finally, not surprisingly, complete randomization has the largest standard 
errors.
As we include more covariate into the working model, the standard error 
gradually decreases.
This is because the covariate imbalance is partially adjusted by the linear regression.
When all covariates are included in the working model (i.e., W4), the standard 
errors becomes the smallest.
Note that the W4 under all randomization methods are almost the same.
This is because covariate imbalance from the randomization methods have been completely adjusted,
therefore, the standard error reaches its minimum.  
\begin{table}[t]
\begin{center}
\begin{tabular}{ c | c  c  c  c}
\hline
\hline
\multirow{2}{*}{Randomization method}	& \multicolumn{4}{ |c }{Working model for estimating $\mu_1-\mu_2$}\\
& W1 & W2 & W3 & W4\\
\hline
CR	        & 6.604616 & 5.622748 & 4.006424 & 1.970360\\
RR          & 4.036759 & 3.544364 & 2.769106 & 1.987251\\
Proposed	& 2.051219 & 2.031727 & 2.003411 & 1.985727\\
\hline
\end{tabular}
\caption{Comparison of standard errors of estimated treatment effect for working model W1, W2, W3, and W4 under different randomization methods.  This table is a verification of Table \ref{tab:thm_demo}.}
\label{tab:thm_simu}
\end{center}
\end{table}

\section{Real Data Example}\label{sec:Real_Data}

In this section, we illustrate our proposed method using a real clinical study 
of a Ceragem massage (CGM) thermal therapy bed, 
a device for treating lumbar disc disease.  
In total, there are 186 patients with $p=50$ covariates.
There are 30 continuous covariates, such as age and 
baseline measurements of the patient's current conditions, e.g., lower back 
pain and leg numbness, all measured on 0-10 scales.  
The outcome variable $y_i$, representing the measurements of the lower back 
pain after the treatment or control experiment, 
was recorded to study the treatment effect.

In the original study, the patients were randomly assigned to 
the treatment or control groups. The corresponding Mahalanobis distance was 
57.67, which indicates a moderate covariate imbalance.
To compare, we repeatedly assigned these patients to treatment groups using the 
proposed method, complete randomization, and 
rerandomization ($M<a$ and $a = 20, 30, 40$).  
The corresponding Mahalanobis distances are plotted in 
Figure~\ref{fig:comparison_M_real_data}.
Note that, in the right panel of 
Figure~\ref{fig:comparison_M_real_data}, we replicated the data four 
times to $n=744$ (which attempts to mimic the large sample size settings).

\begin{figure}[t]
    \centering
    \begin{subfigure}[b]{0.48\textwidth}
        \includegraphics[scale=0.45]{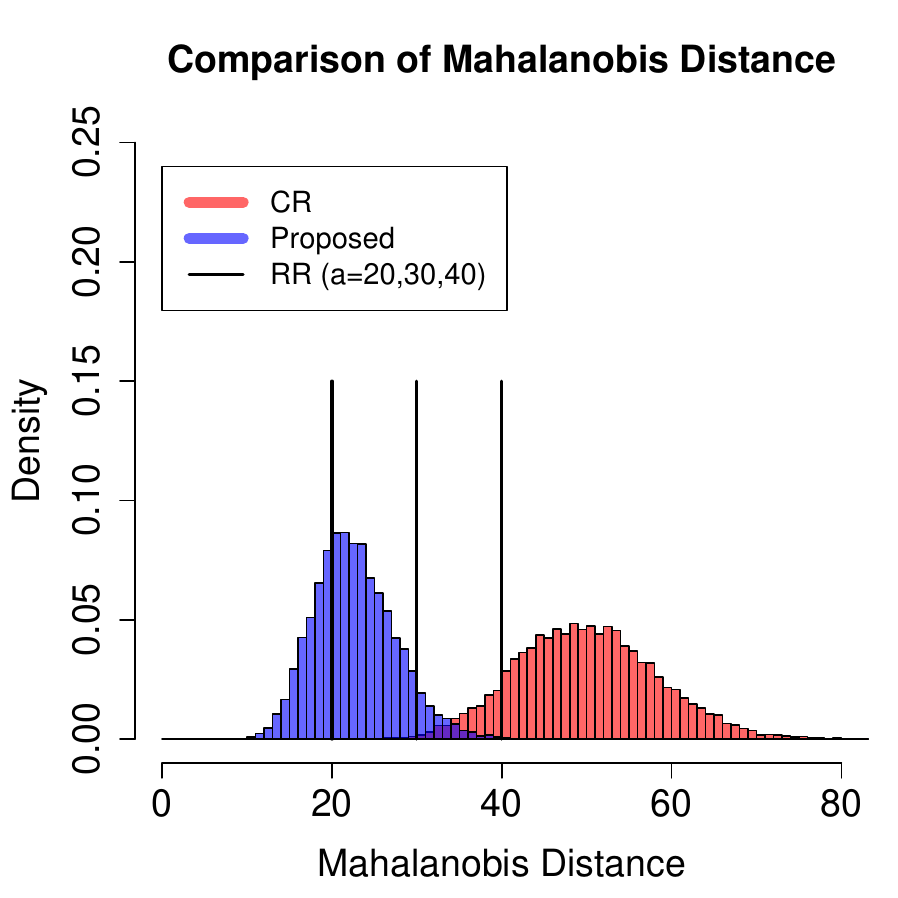}
        \caption{$n=186$}
        \label{fig:original_realdata}
    \end{subfigure}
    \begin{subfigure}[b]{0.48\textwidth}
        \includegraphics[scale=0.45]{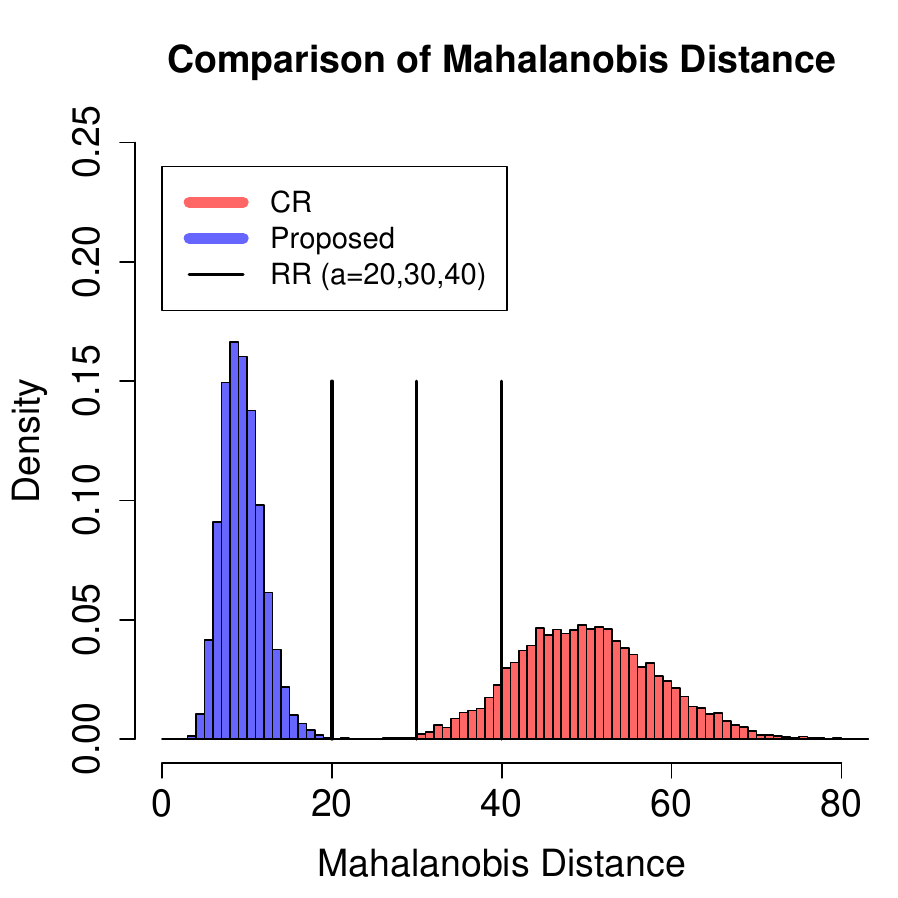}
        \caption{$n=744$}
        \label{fig:repeated_realdata}
    \end{subfigure}
    \caption{Comparison of the distributions of the Mahalanobis distance 
    obtained using the proposed method, complete 
    randomization, and 
    rerandomization.  Note that rerandomization is represented by the portion 
    of the complete randomization distribution that lies to the left of the 
    vertical line ($M=20, 30, 40$).}
    \label{fig:comparison_M_real_data}
\end{figure}

As seen from the figure, the Mahalanobis distances of the 
proposed method on the original data ($n=186$) are consistently lower than 
those of complete randomization.
If we had $n=744$ patients, the Mahalanobis distance of the proposed method 
further decreases toward 0.  
Few allocations of complete randomization could achieve the 
same level of balance as the proposed.  
Rerandomization produces the Mahalanobis distances to 
the left of the vertical lines ($M=20,30,40$), which are still not 
comparable with the proposed.

For each randomization method, we further simulated the outcome variable 
$y_i^{\textup{sim}}$ according to $y_i^{\textup{sim}}=\hat{\mu}_1 
T_i^{\textup{sim}} + \hat{\mu}_2 
(1-T_i^{\textup{sim}})+\boldsymbol{x}_i^T\hat{\boldsymbol{\beta}}+\epsilon_i^{\textup{sim}}$,
 where $T_i^{\textup{sim}}$ is the simulated patient allocation, 
$\epsilon_i^{\textup{sim}}$ is sampled from the residuals of the 
regression fitted to the original data.  $\hat{\mu}_1$, $\hat{\mu}_2$, and 
$\hat{\boldsymbol{\beta}}$ are the corresponding 
estimated regression coefficients.  

Using the simulated outcome variable, we obtained the average treatment effect 
using $\hat{\tau}$.  
The performance comparison is summarized in Table \ref{tab:real_data}.  
The proposed method exhibits the best performance compared with other methods 
especially under large sample size.  
It yields the largest PRIV and the lowest variance.  For rerandomization, a 
smaller threshold results in better performance; 
however, this comes at the cost of a longer computational time and a lower 
acceptance probability.  Note that the $R^2$ for the regression fitted to the 
original data is only $0.33$, therefore, the maximum of PRIV is 0.33.  
Because of the finite sample size, the optimal PRIV cannot be achieved.  
We can see that if we increase the sample 
size, the PRIV of the proposed method is greatly improved and is close to 
optimal, whereas that of the rerandomization method does not improve at all.  
The gain from the proposed method is quite substantial.

\begin{table}[t]
\begin{center}
\begin{tabular}{ c | c | c  c  c }
\hline
\hline
Sample Size & Method & PRIV & MSE (or Var) & $u_{n}$ or $v_a$\\
\hline
\multirow{5}{*}{$n=186$}	& Proposed & 19.7\% & 0.081 & 0.502\\
							& RR ($M<40$) & 12.2\%	& 0.090 & 0.730\\
							& RR ($M<30$) & 15.1\%	& 0.085 & 0.562\\
							& RR ($M<20$) & 20.3\%	& 0.081 & 0.501\\
							& CR & - & 0.100 & -\\
\hline
\multirow{5}{*}{$n=744$}	& Proposed    & 27.4\%  & 0.018 & 0.205\\
							& RR ($M<40$) & 10.9\%	& 0.022 & 0.718\\
							& RR ($M<30$) & 14.6\%	& 0.021 & 0.556\\
							& RR ($M<20$) & 20.6\%	& 0.018 & 0.380\\
							& CR & - & 0.025 & -\\
\hline
\end{tabular}
\caption{Comparison of the proposed method with rerandomization and complete randomization for real data analysis.}
\label{tab:real_data}
\end{center}
\end{table}

\section{Discussion}\label{sec:Discussion}
In this article, we have introduced a new randomization procedure for balancing 
the covariates to improve the estimation accuracy for causal inference and 
clinical trials.  Compared with 
traditional methods, the proposed method can cope with a large number of 
covariates and a large sample size, which is especially advantageous in the era 
of big data.  The 
proposed method also shows superior performance in terms of computational 
time.  In addition, it achieves optimality under the linear regression 
framework, in the sense that, asymptotically, the proposed method can balance 
the covariates so well that the imbalance adjustment provided by linear 
regression is not needed.

Although the proposed method is different from the minimization methods 
\citep{Wei1978,Begg1980, Smith1984, Smith1984b}, it can be 
extended to such settings.  
Instead of selecting a pair of units, we can select 
only one unit to allocate.  However, the behavior of the Mahalanobis distance 
in such a scenario will be further complicated, because the proportion of the 
treatment group (i.e., $\sum_{i=1}^n T_i/n$) then becomes a random variable.  
We believe that the allocation procedure should be slightly modified such that 
both the Mahalanobis distance and the proportion are controlled.  In such a 
scenario, we anticipate that the rate of convergence of the Mahalanobis 
distance can be further improved. We leave this possibility as a topic for 
future investigation.  

The proposed method is following the similar spirit of the minimization methods 
used in clinical trials \citep{Taves1974,Pocock1975,Hu2012}.  
However, the focus and context of these methods are different from ours.
Their methods are applicable for patients sequentially enrolled in a clinical trial.  
On the other hand, our proposed method can be applied {\it both} in clinical  trials 
where units enrolled sequentially, and also in causal inference where all units 
are collected before the randomization and experiment starts.
Another significant difference is that the minimization 
methods are suitable for discrete covariates, minimizing the margin and 
stratum imbalance.  The proposed method, in contrast, is suitable for both discrete and continuous covariates.

Throughout the article, we have focused on equal proportion allocation.  
However, the proposed method can be easily extended to accommodate unequal 
proportions.  For example, to achieve a ratio of $1:2$, in each iteration, we 
can allocate three units at a time to maintain the targeted proportions.

Many other potential directions for further research remain as well. For example, we have shown the optimality of the estimated treatment effect.  An extension to hypothesis testing is also of interest \citep{Ma2015}.  The optimality of the estimator hints at the most powerful test for the treatment effect.  In addition, as the number of covariates increases, it is more efficient to balance only the most important covariates \citep{Morgan2015}; therefore, the selection of the important covariates to balance in our proposed framework is an interesting topic. The proposed method may also be applied to balance important covariates in the field of crowdsourced-internet experimentation. 

\bigskip
\ssection{APPENDIX}\label{sec:Appendix}
\bigskip
We provide outlines of the key proofs in the Appendix.  The supplementary materials contain detailed proofs of all theorems.

\begin{proof}[Proof of Theorem \ref{thm:MahaDistLimitDist}]

We first convert the covariates to canonical form \citep{Rubin1992}.  Let 
$\Sigma=\text{cov}(\boldsymbol{x})$ and 
$\boldsymbol{z}_i=\Sigma^{-1/2}\boldsymbol{x}_i$ where $\Sigma^{-1/2}$ is the 
Cholesky square root of $\Sigma^{-1}$.  Suppose that $n$ is even.  By the 
assumption, $\text{cov}(\boldsymbol{z}_i)=\boldsymbol{I}$, and
\begin{align*}
M^*(n)&=n 
p_n(1-p_n)(\boldsymbol{\bar{z}}_1-\boldsymbol{\bar{z}}_2)^T(\boldsymbol{\bar{z}}_1-\boldsymbol{\bar{z}}_2).
\end{align*}
We further define
\begin{align*}
\boldsymbol{y}_n&=\frac{n}{2}(\boldsymbol{\bar{z}}_1-\boldsymbol{\bar{z}}_2)=\sum_{i:T_i = 1}\boldsymbol{z}_{i}-\sum_{i:T_i = 0}\boldsymbol{z}_{i},\\
\boldsymbol{\Delta}_{n+2}&=(-1)^{T_{n+2}}(\boldsymbol{z}_{n+1}-\boldsymbol{z}_{n+2}).
\end{align*}
We can see that $\{\boldsymbol{y}_n, \boldsymbol{y}_{n+2} ,\boldsymbol{y}_{n+4} ...\}$ is a Markov process and $\boldsymbol{y}_{n+2}=\boldsymbol{y}_{n}+\boldsymbol{\Delta}_{n+2}$.  Define the test function $V(\boldsymbol{y}_{n})=\boldsymbol{y}^T_{n}\boldsymbol{y}_{n}$.  By denoting $\mathbb{E}[~\cdot~|\boldsymbol{y}_n]=\mathbb{E}_n[~\cdot~]$, we have
\begin{align*}
\mathbb{E}_n[V(\boldsymbol{y}_{n+2})]-V(\boldsymbol{y}_{n})=\mathbb{E}_n[\boldsymbol{y}^T_{n+2}\boldsymbol{y}_{n+2}]-\boldsymbol{y}^T_{n}\boldsymbol{y}_{n}=2\mathbb{E}_n[\boldsymbol{y}^T_{n}\boldsymbol{\Delta}_{n+2}]+\mathbb{E}_n[\boldsymbol{\Delta}_{n+2}^T\boldsymbol{\Delta}_{n+2}],
\end{align*}
where $\mathbb{E}_n[\boldsymbol{\Delta}_{n+2}^T\boldsymbol{\Delta}_{n+2}]=\mathbb{E}_n[(-1)^{2T_{n+2}}(\boldsymbol{z}_{n+1}-\boldsymbol{z}_{n+2})^T(\boldsymbol{z}_{n+1}-\boldsymbol{z}_{n+2})]$ is a positive constant.  For the first term on the right, we have
\begin{align*}
\mathbb{E}_n [\boldsymbol{y}^T_{n}\boldsymbol{\Delta}_{n+2}] &=\mathbb{E}_n [\boldsymbol{y}^T_{n}(-1)^{T_{n+2}}(\boldsymbol{z}_{n+1}-\boldsymbol{z}_{n+2})]\\
&=\mathbb{E}_n \Big\{\mathbb{E} \Big[\boldsymbol{y}^T_{n}(-1)^{T_{n+2}}(\boldsymbol{z}_{n+1}-\boldsymbol{z}_{n+2})\Big|\boldsymbol{z}_{n+1},\boldsymbol{z}_{n+2}\Big]\Big\}\\
&=\mathbb{E}_n \Big\{ (1-2q)|\boldsymbol{y}^T_{n}(\boldsymbol{z}_{n+1}-\boldsymbol{z}_{n+2})| \Big\}\\
&=\mathbb{E}_n \Big\{ (1-2q)|\boldsymbol{y}_{n}|~|\boldsymbol{z}_{n+1}-\boldsymbol{z}_{n+2}|~|\cos\theta| \Big\}\\
&=(1-2q)|\boldsymbol{y}_{n}|~\mathbb{E}_n[|\boldsymbol{z}_{n+1}-\boldsymbol{z}_{n+2}|]~\mathbb{E}_n[|\cos\theta|],
\end{align*}
where $\theta$ is the angle between $\boldsymbol{y}_n$ and 
$\boldsymbol{z}_{n+1}-\boldsymbol{z}_{n+2}$.  Note that 
$\mathbb{E}_n[|\boldsymbol{z}_{n+1}-\boldsymbol{z}_{n+2}|]$ and 
$\mathbb{E}_n[|\cos\theta|]$ are two positive constants.  Since $1-2q<0$, there 
exist a constant $b>0$ and $c<0$, such as when $|\boldsymbol{y}_{n}|>b$, 
$\mathbb{E}_n[ 
\boldsymbol{y}^T_{n}\boldsymbol{\Delta}_{n+2}]+\mathbb{E}_n[\boldsymbol{\Delta}_{n+2}^T\boldsymbol{\Delta}_{n+2}]<c$.
  Therefore, $\mathbb{E}_n[V(\boldsymbol{y}_{n+2})]-V(\boldsymbol{y}_{n})<c$ 
for $|\boldsymbol{y}_{n}|>b$.  Similarly, we have 
$\mathbb{E}_n[V(\boldsymbol{y}_{n+2})]-V(\boldsymbol{y}_{n})<\mathbb{E}_n[\boldsymbol{\Delta}_{n+2}^T\boldsymbol{\Delta}_{n+2}]$
 for $|\boldsymbol{y}_{n}|\leq b$.  By the ``drift conditions'' 
\citep{Meyn2009}, we know $\boldsymbol{y}_{n}$ has a stationary distribution.  
Therefore, $n M^*(n)/(4p_n(1-p_n))=\boldsymbol{y}^T_{n}\boldsymbol{y}_{n}$ has 
a 
stationary distribution and $M^*(n)=O_p(n^{-1})$.

In practice, covariance materix $\Sigma$ is not known and is sequentially 
estimated as $\widehat{\Sigma}_n$.  We write the Mahalanobis distance as
\begin{align*}
M(n)
=&n 
p_n(1-p_n)(\boldsymbol{\bar{x}}_1-\boldsymbol{\bar{x}}_2)^T\widehat{\Sigma}_n^{-1}
(\boldsymbol{\bar{x}}_1-\boldsymbol{\bar{x}}_2).\\
=&n p_n(1-p_n)(\boldsymbol{\bar{x}}_1-\boldsymbol{\bar{x}}_2)^T
(\widehat{\Sigma}_n^{-1}+\Sigma^{-1}-\Sigma^{-1})
(\boldsymbol{\bar{x}}_1-\boldsymbol{\bar{x}}_2).\\
=&M^*(n)+n p_n(1-p_n)(\boldsymbol{\bar{x}}_1-\boldsymbol{\bar{x}}_2)^T
(\widehat{\Sigma}_n^{-1}-\Sigma^{-1})
(\boldsymbol{\bar{x}}_1-\boldsymbol{\bar{x}}_2)\\
=&M^*(n)+n p_n(1-p_n)(\boldsymbol{\bar{x}}_1-\boldsymbol{\bar{x}}_2)^T
\Sigma^{-1/2}\Sigma^{1/2}(\widehat{\Sigma}_n^{-1}-\Sigma^{-1})\Sigma^{1/2}\Sigma^{-1/2}
(\boldsymbol{\bar{x}}_1-\boldsymbol{\bar{x}}_2)\\
=&M^*(n)+n p_n(1-p_n)(\boldsymbol{\bar{z}}_1-\boldsymbol{\bar{z}}_2)^T
(\Sigma^{1/2}\widehat{\Sigma}_n^{-1}\Sigma^{1/2}-I)
(\boldsymbol{\bar{z}}_1-\boldsymbol{\bar{z}}_2)
\end{align*}
Note that $(\boldsymbol{\bar{z}}_1-\boldsymbol{\bar{z}}_2)^T
(\Sigma^{1/2}\widehat{\Sigma}_n^{-1}\Sigma^{1/2}-I)
(\boldsymbol{\bar{z}}_1-\boldsymbol{\bar{z}}_2)$ can be considered as the 
weighted norm of $\boldsymbol{\bar{z}}_1-\boldsymbol{\bar{z}}_2$, i.e., 
$||\boldsymbol{\bar{z}}_1-\boldsymbol{\bar{z}}_2||_W$ where 
$W=\Sigma^{1/2}\widehat{\Sigma}_n^{-1}\Sigma^{1/2}-I
=\Sigma^{1/2}(\widehat{\Sigma}_n^{-1} - \Sigma )\Sigma^{1/2}$.  Since 
$\widehat{\Sigma}_n \overset{p}{\to} \Sigma$, we 
have $M(n)=O_p(M(n))$
\end{proof}

\begin{proof}[Proof of Theorem \ref{thm:minimum_variance}]
We first convert the covariates to canonical form \citep{Rubin1992}.  Let $\Sigma=\text{cov}(\boldsymbol{x})$ and $\boldsymbol{z}_i=\Sigma^{-1/2}\boldsymbol{x}_i$ where $\Sigma^{-1/2}$ is the Cholesky square root of $\Sigma^{-1}$.  Suppose $n$ is even.  By the assumption of normality, $\boldsymbol{z}_i\sim N(0,\boldsymbol{I})$.  

Define
\begin{align*}
\boldsymbol{Y}=\begin{bmatrix}
   y_{1}\\
   y_{2}\\
   \vdots\\
   y_{n}
  \end{bmatrix},
\boldsymbol{Z}= \begin{bmatrix}
  \boldsymbol{z}_1^T \\
  \boldsymbol{z}_2^T \\
  \vdots  \\
  \boldsymbol{z}_n^T
 \end{bmatrix},
\boldsymbol{T}=\begin{bmatrix}
   T_{1}\\
   T_{2}\\
   \vdots\\
   T_{n}
  \end{bmatrix},
\widetilde{\boldsymbol{T}}= \begin{bmatrix}
  T_{1} & 1-T_{1} \\
  T_{2} & 1-T_{2}\\
  \vdots  & \vdots\\
  T_{n} & 1-T_{n}
 \end{bmatrix},
\end{align*}
and $\widetilde{\boldsymbol{Z}}=[\widetilde{\boldsymbol{T}}; \boldsymbol{Z}]$,  $\boldsymbol{\gamma}=(\gamma_1,...,\gamma_p)^T=(\Sigma^{-1/2})^T \boldsymbol{\beta}$, $\boldsymbol{\mu}=(\mu_1,\mu_2)^T$ and $\boldsymbol{\gamma}^*=(\boldsymbol{\mu}^T,\boldsymbol{\gamma}^T)^T=(\mu_1,\mu_2,\gamma_1,...,\gamma_p)^T$.    

Then true model, equation \eqref{eq:true_model}, can be rewritten as
\begin{align*}
\boldsymbol{Y} &=\widetilde{\boldsymbol{X}} \boldsymbol{\beta}^* + \pmb\epsilon = \widetilde{\boldsymbol{T}}\boldsymbol{\mu}+ \boldsymbol{X}\boldsymbol{\beta}+\boldsymbol{\epsilon} = \widetilde{\boldsymbol{T}}\boldsymbol{\mu}+ \boldsymbol{Z}\boldsymbol{\gamma}+\boldsymbol{\epsilon} = \widetilde{\boldsymbol{Z}} \boldsymbol{\gamma}^* + \pmb\epsilon.
\end{align*}

{\bf Part I}: $\hat{\tau}_{\textup{PSR}}$

Suppose $\boldsymbol{K}=(1,-1)$, then $\hat{\tau}_{\textup{PSR}}$ can be 
obtained by running the regression, $\boldsymbol{Y} = 
\widetilde{\boldsymbol{T}}\boldsymbol{\mu}+\boldsymbol{\epsilon}$, even though 
the true model is $\boldsymbol{Y} = 
\widetilde{\boldsymbol{Z}}\boldsymbol{\gamma}^*+\boldsymbol{\epsilon}= 
\widetilde{\boldsymbol{T}}\boldsymbol{\mu}+ 
\boldsymbol{Z}\boldsymbol{\gamma}+\boldsymbol{\epsilon}$.  In particular, we 
can write $\hat{\tau}_{\textup{PSR}}$ as
\begin{align*}
\hat{\tau}_{\textup{PSR}}&=\frac{\sum_{i=1}^{n} T_i 
y_i}{\sum_{i=1}^{n}T_i}-\frac{\sum_{i=1}^{n} 
(1-T_i)y_i}{\sum_{i=1}^{n}(1-T_i)}\\
&= \boldsymbol{K}\bigg(\frac{\widetilde{\boldsymbol{T}}^T\widetilde{\boldsymbol{T}}}{n}\bigg)^{-1}\frac{\widetilde{\boldsymbol{T}}^T\boldsymbol{Y}}{n}\\
&= \boldsymbol{K}\bigg(\frac{\widetilde{\boldsymbol{T}}^T\widetilde{\boldsymbol{T}}}{n}\bigg)^{-1}\frac{\widetilde{\boldsymbol{T}}^T(\widetilde{\boldsymbol{Z}} \boldsymbol{\gamma}^* + \boldsymbol{\epsilon})}{n}\\
&= \boldsymbol{K}\bigg(\frac{\widetilde{\boldsymbol{T}}^T\widetilde{\boldsymbol{T}}}{n}\bigg)^{-1}\frac{\widetilde{\boldsymbol{T}}^T(\widetilde{\boldsymbol{T}} \boldsymbol{\mu} + \boldsymbol{Z} \boldsymbol{\gamma}+ \boldsymbol{\epsilon})}{n}\\
&=\boldsymbol{K} \bigg[ \boldsymbol{\mu} + \bigg(\frac{\widetilde{\boldsymbol{T}}^T\widetilde{\boldsymbol{T}}}{n}\bigg)^{-1}\frac{\widetilde{\boldsymbol{T}}^T(\boldsymbol{Z} \boldsymbol{\gamma}+ \boldsymbol{\epsilon})}{n}\bigg]\\
&=\mu_1-\mu_2 + \boldsymbol{K}\bigg(\frac{\widetilde{\boldsymbol{T}}^T\widetilde{\boldsymbol{T}}}{n}\bigg)^{-1}\frac{\widetilde{\boldsymbol{T}}^T(\boldsymbol{Z} \boldsymbol{\gamma}+ \boldsymbol{\epsilon})}{n}.
\end{align*}
We know, as $n \to \infty$, 
\begin{align*}
\frac{\widetilde{\boldsymbol{T}}^T\widetilde{\boldsymbol{T}}}{n} \overset{p}{\to} \begin{bmatrix}0.5&0\\0&0.5\end{bmatrix} = \boldsymbol{M}.
\end{align*} 
We further define
\begin{align*}
\boldsymbol{A}&= \boldsymbol{K} \boldsymbol{M}^{-1} \bigg[ \frac{\widetilde{\boldsymbol{T}}^T(\boldsymbol{Z} \boldsymbol{\gamma}+ \boldsymbol{\epsilon})}{n}\bigg],\\
\boldsymbol{B}&=\boldsymbol{K}\bigg[ \Big(\frac{\widetilde{\boldsymbol{T}}^T\widetilde{\boldsymbol{T}}}{n}\Big)^{-1} - \boldsymbol{M}^{-1}\bigg]\bigg[ \frac{\widetilde{\boldsymbol{T}}^T(\boldsymbol{Z} \boldsymbol{\gamma}+ \boldsymbol{\epsilon})}{n}\bigg],\\
\end{align*}
so that $\hat{\tau}_{\textup{PSR}}=\boldsymbol{A}+\boldsymbol{B}$.

For $\boldsymbol{A}$, with some algebra, we can show
\begin{align*}
\boldsymbol{A}&=\frac{2}{n}\bigg[\sum_{j=1}^{p}\sum_{i=1}^{n}(2T_i-1)\gamma_j z_{i,j} + \sum_{i=1}^{n}(2T_i-1) \epsilon_{i}\bigg].
\end{align*}

For the first term on the right, we have 
\begin{align*}
\sum_{j=1}^{p}\sum_{i=1}^{n}(2T_i-1)\gamma_j z_{i,j} = \sum_{j=1}^{p} \gamma_j \bigg[\sum_{i \in \{i:T_i=1\}}z_{i,j} - \sum_{i \in \{i:T_i=0\}}z_{i,j}\bigg].
\end{align*}
where $\{i:T_i=1\}$ and $\{i:T_i=0\}$ represent the two treatment groups.  From the proof of Theorem \ref{thm:MahaDistLimitDist}, we understand that $\sum_{i \in \{i:T_i=1\}}z_{i,j} - \sum_{i \in \{i:T_i=0\}}z_{i,j}$ is a stationary process under the proposed method (i.e. a mean reverting process as $n \to \infty$).  Therefore, 
\begin{align*}
\sum_{i \in \{i:T_i=1\}}z_{i,j} - \sum_{i \in \{i:T_i=0\}}z_{i,j}&=O_p(1),\\
\sum_{j=1}^{p} \gamma_j [\sum_{i \in \{i:T_i=1\}}z_{i,j} - \sum_{i \in \{i:T_i=0\}}z_{i,j}]&=O_p(1).
\end{align*}
In addition, note that $(2T_i-1)^2=1$, we have
\begin{align*}
\textup{Var}\bigg(\frac{2}{n}\sum_{i=1}^{n}(2T_i-1) \epsilon_{i}\bigg)&=\mathbb{E}\bigg(\frac{4}{n^2}\sum_{i=1}^{n}(2T_i-1)^2 \epsilon_{i}^2\bigg)\\
&=\mathbb{E}\bigg(\frac{4}{n^2}\sum_{i=1}^{n}\epsilon_{i}^2\bigg)\\
&=\frac{4\sigma_{\epsilon}^2}{n}.
\end{align*}
Therefore, 
\begin{align*}
\sqrt{n}\boldsymbol{A} \overset{D}{\rightarrow}N(0,4\sigma^2_{\epsilon}).
\end{align*}
Similarly, for $\boldsymbol{B}$, we will show $\sqrt{n}\boldsymbol{B} \overset{p}{\to} 0$.  First note that
\begin{align*}
 \Big(\frac{\widetilde{\boldsymbol{T}}^T\widetilde{\boldsymbol{T}}}{n}\Big)^{-1} - \boldsymbol{M}^{-1} \overset{p}{\to} 0.
\end{align*}
Therefore, showing $\sqrt{n}\boldsymbol{B} \overset{p}{\to} 0$ is equivalent to show
\begin{align*}
\frac{\widetilde{\boldsymbol{T}}^T(\boldsymbol{Z} \boldsymbol{\gamma}+ \boldsymbol{\epsilon})}{\sqrt{n}}=O_p(1).
\end{align*}
First, notice that
\begin{align*}
\frac{\widetilde{\boldsymbol{T}}^T(\boldsymbol{Z} \boldsymbol{\gamma}+ \boldsymbol{\epsilon})}{\sqrt{n}}=\frac{1}{\sqrt{n}}\begin{bmatrix} \sum_{j=1}^{p}\sum_{i=1}^{n}T_i\gamma_j z_{i,j} + \sum_{i=1}^{n}T_i \epsilon_{i}\\\sum_{j=1}^{p}\sum_{i=1}^{n}(1-T_i)\gamma_j z_{i,j} + \sum_{i=1}^{n}(1-T_i) \epsilon_{i} \end{bmatrix}.
\end{align*}
Since 
\begin{align*}
\frac{1}{\sqrt{n}}\Big(\sum_{j=1}^{p}\sum_{i=1}^{n}T_i\gamma_j z_{i,j} + \sum_{i=1}^{n}T_i \epsilon_{i}\Big)=&\frac{1}{2}\bigg[\frac{1}{\sqrt{n}}\Big(\sum_{j=1}^{p}\sum_{i=1}^{n}\gamma_j z_{i,j} + \sum_{i=1}^{n} \epsilon_{i}\Big) + \\ &\frac{1}{\sqrt{n}}\Big(\sum_{j=1}^{p}\sum_{i=1}^{n}(2T_i-1)\gamma_j z_{i,j} + \sum_{i=1}^{n}(2T_i-1) \epsilon_{i}\Big)\bigg].
\end{align*}
By central limit theorem, we have
\begin{align*}
\frac{1}{\sqrt{n}}\Big(\sum_{j=1}^{p}\sum_{i=1}^{n}\gamma_j z_{i,j} + \sum_{i=1}^{n} \epsilon_{i}\Big) =O_p(1).
\end{align*}
In addition,
\begin{align*}
\frac{1}{\sqrt{n}}\Big(\sum_{j=1}^{p}\sum_{i=1}^{n}(2T_i-1)\gamma_j z_{i,j} + \sum_{i=1}^{n}(2T_i-1) \epsilon_{i}\Big)=\frac{\sqrt{n}\boldsymbol{A}}{2}.
\end{align*}
Since $\sqrt{n}\boldsymbol{A}$ converges to a normal distribution, 
\begin{align*}
\frac{1}{\sqrt{n}}\Big(\sum_{j=1}^{p}\sum_{i=1}^{n}(2T_i-1)\gamma_j z_{i,j} + \sum_{i=1}^{n}(2T_i-1) \epsilon_{i}\Big)=O_p(1).
\end{align*}
Therefore,
\begin{align*}
\frac{1}{\sqrt{n}}\Big(\sum_{j=1}^{p}\sum_{i=1}^{n}T_i\gamma_j z_{i,j} + \sum_{i=1}^{n}T_i \epsilon_{i}\Big)=O_p(1).
\end{align*}
By symmetry, we have 
\begin{align*}
\frac{1}{\sqrt{n}}\Big(\sum_{j=1}^{p}\sum_{i=1}^{n}(1-T_i)\gamma_j z_{i,j} + \sum_{i=1}^{n}(1-T_i)  \epsilon_{i}\Big)=O_p(1).
\end{align*}
Therefore,
\begin{align*}
\frac{\widetilde{\boldsymbol{T}}^T(\boldsymbol{Z} \boldsymbol{\gamma}+ \boldsymbol{\epsilon})}{\sqrt{n}}=O_p(1).
\end{align*}
Hence, $\sqrt{n}\boldsymbol{B} \overset{p}{\to} 0$, together with $\sqrt{n}\boldsymbol{A} \overset{D}{\rightarrow}N(0,4\sigma^2_{\epsilon})$, by Slutsky's theorem, we have
\begin{align*}
\sqrt{n}(\hat{\tau}_{\textup{PSR}}-(\mu_1-\mu_2)) \overset{D}{\to} 
N(0,4\sigma^2_{\epsilon}).
\end{align*}

For $\hat{\tau}_{\textup{CR}}$, $\tilde{\tau}_{\textup{PSR}}$, and 
$\tilde{\tau}_{\textup{CR}}$, we can obtain their asymptotic distributions in 
similar ways.  Please see supplementary materials for details.
\end{proof}

\bibliographystyle{apalike}
\bibliography{YichenBib}

\begin{thebibliography}{}

\bibitem[Antognini and Zagoraiou, 2011]{Antognini2011}
Antognini, A.~B. and Zagoraiou, M. (2011).
\newblock The covariate-adaptive biased coin design for balancing clinical
  trials in the presence of prognostic factors.
\newblock {\em Biometrika}, 98(3):519--535.

\bibitem[Begg and Iglewicz, 1980]{Begg1980}
Begg, C.~B. and Iglewicz, B. (1980).
\newblock A treatment allocation procedure for sequential clinical trials.
\newblock {\em Biometrics}, 36(1):81--90.

\bibitem[Bruhn and McKenzie, 2008]{Bruhn2008}
Bruhn, M. and McKenzie, D. (2008).
\newblock In pursuit of balance: Randomization in practice in development field
  experiments.
\newblock {\em World Bank Policy Research Working Papers}.

\bibitem[Chandler and Kapelner, 2013]{Chandler2013}
Chandler, D. and Kapelner, A. (2013).
\newblock Breaking monotony with meaning: Motivation in crowdsourcing markets.
\newblock {\em Journal of Economic Behavior \& Organization}, 90:123--133.

\bibitem[Ciolino et~al., 2011]{Ciolino2011}
Ciolino, J., Zhao, W., Martin, R., and Palesch, Y. (2011).
\newblock Quantifying the cost in power of ignoring continuous covariate
  imbalances in clinical trial randomization.
\newblock {\em Contemporary Clinical Trials}, 32(2):250--259.

\bibitem[Cochran, 1965]{Cochran1965}
Cochran, W.~G. (1965).
\newblock The planning of observational studies of human populations.
\newblock {\em Journal of the Royal Statistical Society, Series A},
  128(2):234--266.

\bibitem[Cochran and Rubin, 1973]{Cochran1973}
Cochran, W.~G. and Rubin, D.~B. (1973).
\newblock Controlling bias in observational studies: a review.
\newblock {\em Sankhya, A}, 35(4):417--446.

\bibitem[Cox, 1982]{Cox1982}
Cox, D.~R. (1982).
\newblock Randomization and concomitant variables in the design of experiments.
\newblock {\em Statistics and Probability: Essays in Honor of C. R. Rao}, pages
  197--202.
\newblock North-Holland, Amsterdam. MR0659470.

\bibitem[Fisher, 1926]{Fisher1926}
Fisher, R.~A. (1926).
\newblock The arrangement of field experiments.
\newblock {\em Journal of the Ministry of Agriculture of Great Britain},
  33:503--513.

\bibitem[Frane, 1998]{Frane1998}
Frane, J.~W. (1998).
\newblock A method of biased coin randomization, its implementation, and its
  validation.
\newblock {\em Drug Information Journal}, 32:423--432.

\bibitem[Freedman, 2008]{Freedman2008}
Freedman, D.~A. (2008).
\newblock On regression adjustments to experimental data.
\newblock {\em Advances in Applied Mathematics}, 40(2):180--193.

\bibitem[Hoehler, 1987]{Hoehler1987}
Hoehler, F.~K. (1987).
\newblock Balancing allocation of subjects in biomedical research: a
  minimization strategy based on ranks.
\newblock {\em Computers and Biomedical Research}, 20:209--213.

\bibitem[Horton et~al., 2011]{Horton2011}
Horton, J.~J., Rand, D.~G., and Zeckhauser, R.~J. (2011).
\newblock The online laboratory: conducting experiments in a real labor market.
\newblock {\em Experimental Economics}, 14(3):399--425.

\bibitem[Hu et~al., 2014]{Hu2014}
Hu, F., Hu, Y., Ma, Z., and Rosenberger, W.~F. (2014).
\newblock Adaptive randomization for balancing over covariates.
\newblock {\em Wiley Interdisciplinary Reviews: Computational Statistics},
  6(4):288--303.

\bibitem[Hu and Hu, 2012]{Hu2012}
Hu, Y. and Hu, F. (2012).
\newblock Asymptotic properties of covariate-adaptive randomization.
\newblock {\em Annals of Statistics}, 40(3):1794--1815.

\bibitem[Imbens and Rubin, 2015]{Imbens2015}
Imbens, G.~W. and Rubin, D.~B. (2015).
\newblock {\em Causal Inference for Statistics, Social, and Biomedical
  Sciences}.
\newblock Cambridge University Press.

\bibitem[Kapelner and Krieger, 2014]{Kapelner2014}
Kapelner, A. and Krieger, A. (2014).
\newblock Matching on-the-fly: Sequential allocation with higher power and
  efficiency.
\newblock {\em Biometrics}, 70(2):378--388.

\bibitem[Lin and Su, 2012]{Lin2012}
Lin, Y. and Su, Z. (2012).
\newblock Balancing continuous and categorical baseline covariates in
  sequential clinical trials using the area between empirical cumulative
  distribution functions.
\newblock {\em Statistics in Medicine}, 31:1961--1971.

\bibitem[Lock, 2011]{Morgan2011}
Lock, K.~F. (2011).
\newblock Rerandomization to improve covariate balance in randomized
  experiments.
\newblock {\em Ph.D. Thesis, Harvard University,}.

\bibitem[Ma et~al., 2015]{Ma2015}
Ma, W., Hu, F., and Zhang, L. (2015).
\newblock Testing hypotheses of covariate-adaptive randomized clinical trials.
\newblock {\em Journal of the American Statistical Association},
  110(510):669--680.

\bibitem[Ma and Hu, 2013]{Ma2013}
Ma, Z. and Hu, F. (2013).
\newblock Balancing continuous covariates based on kernel densities.
\newblock {\em Contemporary Clinical Trials}, 34(2):262--269.

\bibitem[Meyn and Tweedie, 2009]{Meyn2009}
Meyn, S.~P. and Tweedie, R.~L. (2009).
\newblock {\em Markov Chains and Stochastic Stability}.
\newblock Cambridge University Press, 2nd edition edition.

\bibitem[Morgan and Rubin, 2012]{Morgan2012}
Morgan, K.~L. and Rubin, D.~B. (2012).
\newblock Rerandomization to improve covariate balance in experiments.
\newblock {\em Annals of Statistics}, 40(2):1263--1282.

\bibitem[Morgan and Rubin, 2015]{Morgan2015}
Morgan, K.~L. and Rubin, D.~B. (2015).
\newblock Rerandomization to balance tiers of covariates.
\newblock {\em Journal of the American Statistical Association},
  110(512):1412--1421.

\bibitem[Pocock and Simon, 1975]{Pocock1975}
Pocock, S.~J. and Simon, R. (1975).
\newblock Sequential treatment assignment with balancing for prognostic factors
  in the controlled clinical trial.
\newblock {\em Biometrics}, 31(1):103--115.

\bibitem[Rubin, 2008]{Rubin2008b}
Rubin, D.~B. (2008).
\newblock For objective causal inference, design trumps analysis.
\newblock {\em Annals of Applied Statistics}, 2(3):808--840.

\bibitem[Rubin and Thomas, 1992]{Rubin1992}
Rubin, D.~B. and Thomas, N. (1992).
\newblock Affinely invariant matching methods with ellipsoidal distributions.
\newblock {\em Annals of Statistics}, 20(2):1079--1093.

\bibitem[Scott et~al., 2002]{Scott2002}
Scott, N.~W., McPherson, G.~C., Ramsay, C.~R., and Campbell, M.~K. (2002).
\newblock The method of minimization for allocation to clinical trials: a
  review.
\newblock {\em Controlled Clinical Trials}, 23(6):662--674.

\bibitem[Smith, 1984a]{Smith1984}
Smith, R.~L. (1984a).
\newblock Properties of biased coin designs in sequential clinical trials.
\newblock {\em Annals of Statistics}, 12(3):1018--1034.

\bibitem[Smith, 1984b]{Smith1984b}
Smith, R.~L. (1984b).
\newblock Sequential treatment allocation using biased coin designs.
\newblock {\em Journal of the Royal Statistical Society. Series B
  (Methodological)}, 46(3):519--543.

\bibitem[Stigsby and Taves, 2010]{Stigsby2010}
Stigsby, B. and Taves, D.~R. (2010).
\newblock Rank-minimization for balanced assignment of subjects in clinical
  trials.
\newblock {\em Contemporary Clinical Trials}, 31(2):147--150.

\bibitem[Taves, 1974]{Taves1974}
Taves, D.~R. (1974).
\newblock Minimization: A new method of assigning patients to treatment and
  control groups.
\newblock {\em Clinical Pharmacology \& Therapeutics}, 15(5):443--453.

\bibitem[Wei, 1978]{Wei1978}
Wei, L.~J. (1978).
\newblock An application of an urn model to the design of sequential controlled
  clinical trials.
\newblock {\em Journal of the American Statistical Association}, 73:559--563.

\end{thebibliography}
\end{document}